\newcommand{\cald}{\mathcal{D}}
\newcommand{\cale}{\mathcal{E}}
\newcommand{\cali}{\mathcal{I}}
\newcommand{\calr}{\mathcal{R}}
\newcommand{\cals}{\mathcal{S}}
\newcommand{\calu}{\mathcal{U}}
\newcommand{\dash}{\hbox{-}\allowbreak}
\newcommand{\LANGDEF}[3]{
\begin{center}
{\small 
\begin{tabularx}{0.98\columnwidth}{ll}
\toprule
\multicolumn{2}{c}{\textsc{#1}} \\
\midrule
{\bf Given:}   & \parbox[t]{0.75\columnwidth}{#2\vspace*{1mm}}  \\
{\bf Question:}& \parbox[t]{0.75\columnwidth}{#3\vspace*{.5mm}} \\ 
\bottomrule
\end{tabularx}
}
\end{center}
}
\newcommand{\PROBDEF}[3]{
\begin{center}
{\small 
\begin{tabularx}{0.98\columnwidth}{ll}
\toprule
\multicolumn{2}{c}{\textsc{#1}} \\
\midrule
{\bf Given:}   & \parbox[t]{0.75\columnwidth}{#2\vspace*{1mm}}  \\
{\bf Compute:}& \parbox[t]{0.75\columnwidth}{#3\vspace*{.5mm}} \\ 
\bottomrule
\end{tabularx}
}
\end{center}
}
\newtheorem{theorem}{Theorem}[section]
\newtheorem{lemma}[theorem]{Lemma}
\newtheorem{proposition}[theorem]{Proposition}
\newtheorem{observation}[theorem]{Observation}
\newtheorem{construction}[theorem]{Construction}
\newtheorem{example}[theorem]{Example}
\newtheorem{definition}[theorem]{Definition}
\newcommand{\naturals}{\ensuremath{\mathbb{N}}}
\newcommand{\posnat}{\ensuremath{\mathbb{N}^+}}
\newcommand{\rmmath}[1]{\ensuremath{{\rm #1}}}
\newcommand{\card}[1]{\lVert#1\rVert}
\newcommand{\score}{\ensuremath{{\rm score}}}
\newcommand{\approval}{\ensuremath{{\rm approval}}}
\newcommand{\dd}{\hbox{-}}
\newcommand{\crc}[1]{%
\ifthenelse
{\equal {#1} {}}
{\rmmath{CRC}_{\naturals}}
{\rmmath{CRC}_{\leq #1}}%
}
\title{The Cost of Failure: On The Complexity of Recampaigning under Fixed Districts}
\author{Michael C. Chavrimootoo and Aidan Jeansonne\\
Department of Computer Science\\
Denison University\\
Granville, OH 43023}
\date{May 21, 2026}
\newif\ifhideproofs
\begin{document}

\maketitle

\begin{abstract}%
    Redistricting efforts have gathered contemporary attention in both 
    popular
    and scholarly debates, particularly in the United States where efforts to redraw congressional districts to favor either of the two major parties in 12 states---such as California, Texas, and Ohio---have captured the public eye. 
    The treatment of redistricting in computational social choice has essentially focused on the process of determining "appropriate" districts.
    In this work, we are interested in understanding the gamut of options left for the "losing" party, and so we consider the flip side of the problem: Given fixed/predetermined districts, can a given party still make their candidates win by strategically placing them in certain districts?
    We dub this as "recampaigning" to capture the intuition that a party would redirect their campaigning efforts from one district to another. 
    We model recampaigning as a computational problem, consider natural variations of the model, and study those new models through the lens of (1) (polynomial-time many-one) interreducibilities, (2) separations/collapses (both unconditional and axiomatic-sufficient), and (3) both worst-case and parametrized complexity.
\end{abstract}

\section{Introduction}

Across the United States of America (USA) at least 12 states are pursuing measures to redraw their district lines in a way that favors one party over all others, and these efforts have been successful in some states, but not all of them~\cite{okr-coh-rig-sch:news:tracking-redistricting}. This process is more commonly known as gerrymandering or redistricting and has been well-studied in recent years.

From the perspective of computational social choice, redistricting is itself a type of electoral control, which is a type of electoral attack in which a chair (or organizer) of an election has the ability to alter the structure of the election to influence its outcome. The common actions are to add/delete/partition candidates/voters. The initial work on control was initiated by Bartholdi, Tovey, and Trick~\cite{bar-tov-tri:j:control} and further extended by Hemaspaandra, Hemaspaandra, and Rothe~\cite{hem-hem-rot:j:destructive-control}. For a comprehensive overview on control, see the chapters on the topic in \cite{fal-rot:b:handbook-comsoc-control-and-bribery,rot:b:econ-second-edition}. On a very simple level, partitioning of voters can be viewed as a way of modeling the action of drawing two districts, and there have been extensions of that model to account for multiple districts~\cite{erd-hem-hem:c:more-natural-models-of-partition-control}.

Since then, additional work has aimed at studying redistricting by considering the geographical placements of voters or ``connectivity'' using graphs~\cite{lew-lev-ros:c:geographic-manipulation,coh-lew-ros:c:gerrymandering-graphs,eib-fom-pan-sim:c:districts-fine-grained,gup-jai-pan-roy-sau:c:gerrymandering-on-graphs-complexity}. The notion of redistricting has also been extensively studied along other dimensions, and we refer readers to the comprehensive book by Duchin and Walch~\cite{duc-wal:b:political-geometry} on the subject.
In more recent years, the notion of ``reverse gerrymandering'' has received attention. In that model, districts are preformed and voters have the opportunity to move across districts~\cite{lev-lew:c:reverse-gerrymandering,pal:j:priced-gerrymandering}. This captures the very real phenomenon where people move to regions where they feel their vote is more likely to have an impact~\cite{hen:news:voters-moving}.

In this paper, we consider a different side of this problem. In particular, we are guided by the following question:

\begin{quote}
If a party is unable to redraw district lines in their favor (for whichever variety of reasons may hold), can they still elect their designated candidates across the various districts, without changing the placement of voters?
\end{quote}

Relating back to the aforementioned examples in the USA, our work captures an underlooked consequence of redistricting, which we argue is just as important to study as redistricting. For example, consider a state with two dominant political parties $A$ and $B$, where $A$ controls the state's legislative and has redrawn the districts in its favor. Then $B$'s attempts to redraw district lines would be futile, and they should instead focus on electing winners across the existing districts. This movement of candidates across parties has been observed several times in recent USA history, for example, in 2024 House member Lauren Boebert ran in a different district in the same state to be reelected to the House of Representatives~\cite{bed:news:boebert}.

But this is not an issue unique to the USA\@. Indeed, in the United Kingdom, Nigel Farage has run in 8 different constituencies since 1994 and was only recently (2024) elected to parliament~\cite{reg-plc-sai:news:farage}. In India, Rahul Gandhi switched constituencies from Amethi to Wayanad to maintain his seat in parliament~\cite{tim:news:gandhi}. Another interesting case is that of Kristina Keneally in Australia who switched electoral division in 2022 and faced defeat~\cite{cal-dan:news:australia-keneally}. This last case highlights that in the real-world, gauging whether one's odds are better in a different division/district can in fact be nontrivial.

To approach the above guiding question, we introduce a new model of candidate control that involves placing candidates strategically in predetermined districts so as to make them winners. 
This can be viewed as the chair of a given party trying to make a set of their candidates win across various districts. From an optimization perspective, this can also be viewed as the problem of maximizing the number of winners from that given party. 
This type of modeling differs from ``apportionment'' in that we are not concerned with dividing a given number of seats, but rather, we are concerned with determining suitable candidates without exceeding the number of seats available. We provide additional examples in Section~\ref{s:prelims} to support that view.
To our knowledge, this is the first type of work that treats the problem of simply assigning candidates to districts to model the fact that redrawing the district lines is possibly hard or maybe even impossible (thereby making incumbent on the party to determine where its candidates should campaign to win). %

\textbf{Contributions:} We introduce the notion of recampaigning and provide a model to study it from a computational perspective. 
We consider several variations to our model, and contrast the variations in several ways. We consider the potential interreducibilities between our variants and find that in general, those do not exist (except for two trivial cases). We also compare our variants using the notion of ``separations and collapses''~\cite{car-cha-hem-nar-tal-wel:j:sct} thereby highlighting that our models are indeed distinct. Finally, we study the worst-case and parametrized complexities of our various problems under several natural voting rules, such as the $t$-approval and $t$-veto families of voting rules, and Condorcet; we also give more general results that hold for all voting rules satisfying certain properties.

\textbf{Organization:} Section~\ref{s:prelims} defines notation and terminology used in this paper, along with our models. Section~\ref{s:relationships} explores relationships between variants of the model, Section~\ref{s:worst-case} contains our worst-case complexity results, and Section~\ref{s:parametrized} contains our parametrized-complexity results. 
The Conclusion section discusses future work.
\section{Preliminaries}\label{s:prelims}

Let $\naturals = \{0, 1, 2, \ldots\}$ and let $\posnat = \{1, 2, 3, \ldots\}$. 
We use the shorthand for each $n \in \posnat$ that $[n] = \{1, \ldots, n\}$ and $[n]_0 = \{0\} \cup [n]$.
In this paper, every graph $G=(V, E)$ is undirected, and we represent an edge in such a graph as a pair $(a, b)$. 
We also assume standard familiarity with worst-case complexity (e.g., \NP-completeness) and parametrized complexity (e.g., fixed-parameter tractability and associated notions of hardness). 
As is standard in complexity theory, for each string $x$, we denote the length of $x$ by $|x|$ and for each set $A$, we denote the cardinality of $A$ by $\card{A}$.
For a set $A$, we let $2^A$ denote the power set of~$A$. %

\subsection{Elections and Voting Rules}

An election is a pair $(C, V)$, where $C$ is a finite set of candidates and $V$ is a finite collection of votes, where each vote is 
a linear order over $C$\@.

A voting rule $\cale$ is a function that maps an election $(C, V)$ to a (possibly empty) subset of $C$, i.e., the winner set. We often abuse notation and let $\cale$ refer both to the function and to the ``name'' of the voting rule.
Moreover, for each voting rule $\cale$, we define the language $W_\cale = \{(C, V, p) \mid p \in \cale(C, V) \}$, which we at times refer to as the winner problem for $\cale$.

When describing votes, we use the following shorthands. If a set is included in a vote, e.g., if $C = \{a, b, c, d\}$ and $D = \{b,c\}$, the vote $D > a > d$ refers to any vote where the candidates in $D$ are ordered in some predetermined polynomial-time computable fashion, e.g., in lexicographical order of candidate name. To refer specifically to the lexicographical ordering of candidates in a vote, we add a right arrow on top, e.g., $\overset{\rightarrow}{D} > a > d$ . If we instead write $\overset{\leftarrow}{D}$, then we mean the reverse lexicographical ordering of the candidates. In a vote, the shorthand ``$\cdots$'' refers to all other candidates not yet specified in the vote.

Several of the results we give will be about so-called (polynomial-time uniform) pure scoring rules~\cite{bet-dor:j:possible-winner-dichotomy}. To understand what those are, we must first speak of scoring vectors.
A scoring vector for an $m$-candidate election $(C, V)$ is a list $(\alpha_1, \ldots, \alpha_m)$ of nonincreasing integers, which effectively determines how many points each candidate receives from a given vote based on their position in that vote.
A pure scoring rule $\calr$ consists of an infinite family of scoring vectors $s_1, s_2, \ldots$ such that for each $i \in \posnat$, $s_i$ is a scoring vector for an $i$-candidate election, $s_i$ is computable in polynomial time with respect to $i$, and $s_{i+1}$ can be obtained from $s_i$ solely by inserting a new value into the list $s_i$. 
The winner set of an election $(C, V)$ under pure scoring rule $\calr$ is computed as follows. First compute the scoring vector $s_{\card C} = (\alpha_1, \ldots, \alpha_{\card C})$ and for each vote $v \in V$, give a candidate at position $i \in [\card C]$---where the top-ranked candidate is at position~1 and last-ranked candidate is at position ${\card C}$---$\alpha_i$ points. Candidates with maximal score are winners.

We say that a scoring rule is nontrivial if for any $m > 1$, the scoring vector for $m$ candidates contains at least two distinct values.  
Let $t \geq 1$. The pure scoring rule $t$-approval gives a candidate one point for each vote in which they are ranked among the first $t$ ranked candidates. The pure scoring rule $t$-veto gives every candidate one point for each vote in which they are \emph{not} among the \emph{last} $t$ ranked candidates. 
\subsection{Our Models}\label{sub:models}
In this paper, we study the computation problem of ``recampaigning'' wherein simultaneous elections across $k$ districts are taking place under a voting rule $\cale$ and an entity is putting forward a set of candidates $A$ with the goal of making every such candidate a winner in some district (with possibly multiple winners in a district). As explained in our Introduction, this is a natural consideration to take and quite relevant, especially in the United States, where parties may seek to maximize the number of their candidates that are winners across the districts in a state, subject to the constraint that they are unable to change the district lines.

For each voting rule $\cale$ and each positive integer $\ell$, we define the following problem.

\LANGDEF{$\cale$ Constructive $\ell$-Recampaigning ($\cale\dash\crc{\ell}$)}
{A set $A$ of $n$ additional candidates and $k$ districts $D_1, \ldots, D_k$, where for each $i \in [k]$, $D_i = (C_i, V_i)$ is an $\cale$ election, $C_i \cap A = \emptyset$ and votes in $V_i$ are over $C_i \cup A$.}
{Is there an assignment of candidates in $A$ to districts such that every $a \in A$ is a winner in the district they are assigned to and each district electing a candidate in $A$ has at most $\ell$ winners?}

 The ``Constructive'' in the above problem's name refers to the fact that the goal of this problem is to make candidate(s) win. 
 Moreover, there is no restriction on how many candidates can be assigned to each district. However, it is worth noting that in $\ell$-recampaigning, a successful assignment of candidates to districts will never be assigning more than $\ell$ candidates to a given district.
 
 In the above problem, the case where $\ell = 1$ is a natural case to consider if each district elects exactly one winner. However, that may not always be the case; many elections around the world run elections where multiple candidates may be elected from one district. One such country is Mauritius, where each constituency (which, in our terminology, corresponds to a  district) elects three candidates to elect the members of parliament~\cite{wik:url:mauritius-elections}, using what is essentially 3-approval. 
 In a different context, award committees often select multiple recipients subject to some threshold, for example by selecting up to, but no more than, three recipients.
 In companies, if the Board of Directors has vacant seats (e.g., following resignations), then an election may be conducted to fill as many seats as possible. However, if not every candidate is deemed suitable, some of those seats may remain unfilled.
 It is thus natural to consider the case where there may be multiple winners in each district, but that number is bounded. Moreover, the reason why we bound the number of winners only for the case where a candidate in $A$ is a winner is because the entity determining the assignment of candidates to districts should not be concerned with the happenings of a district in which it has no participating candidate; we believe that doing so would be unnatural.

Another interesting variant that we also consider in this paper is the case where there is no bound on the number of winners in each district. One may view this as analogous to the standard cowinner/nonunique-winner model in the study of manipulative attacks on elections.
Indeed, the notion of bounding the number of allowed winners in the resulting election has been commonly studied under the terms ``unique-winner model'' and ``nonunique-winner model.'' In our terminology, the unique-winner model is analogous to 1-recampaigning and the nonunique-winner model is analogous to unbounded recampaigning. However, our work views the two standard winner models like a spectrum to provide more refined analyses.
To our knowledge, there are no natural voting rules under which a control problem's complexity depends on the winner model under consideration. However, as we will show, the complexity of recampaigning can depend on whether the number of winners is bounded or unbounded.
We define that problem below for each voting rule $\cale$.

\LANGDEF{$\cale$ Constructive Unbounded Recampaigning ($\cale\dash\crc{}$)}
{A set $A$ of $n$ additional candidates and $k$ districts $D_1, \ldots, D_k$, where for each $i \in [k]$, $D_i = (C_i, V_i)$ is an $\cale$ election, $C_i \cap A = \emptyset$ and votes in $V_i$ are over $C_i \cup A$.}
{Is there an assignment of candidates in $A$ to districts such that every $a \in A$ is a winner in the district they are assigned to?}

For each of the above-defined problems, we consider the priced version which includes as part of the input a price function $\pi : [k] \times A \to \posnat$---where $\pi(i, a)$ is the cost to add a candidate $a \in A$ to district $D_i$---and a budget $B \geq 0$, and the priced problems all have the additional constraint that an assignment of candidates to districts cannot cost more than $B$. For each voting rule $\cale$ and $\ell \geq 1$ respectively denote those priced problems as $\cale\dash\$\crc{\ell}$ and $\cale\dash\$\crc{}$. 

As we shall see, the complexity of $\ell$-recampaigning (which we also call ``bounded recampaigning'') is not necessarily tied to the complexity of the underlying voting rule. However, the complexity of unbounded recampaigning is never lower than that of the underlying voting rule.
\begin{proposition}
    For each voting rule $\cale$, $W_\cale \leq_m^p \cale\dd\crc{}$.%
\end{proposition}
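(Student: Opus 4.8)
The plan is to realize the input election $(C,V)$ as a one‑district recampaigning instance in which the only candidate we are permitted to place is the distinguished candidate $p$. Given an instance $(C,V,p)$ of $W_\cale$, we may assume $p \in C$, since otherwise $p \notin \cale(C,V)$ and we can output a fixed trivially negative instance of $\cale\dd\crc{}$. We then output the $\cale\dd\crc{}$ instance consisting of a single district $D_1$ obtained from $(C,V)$ by deleting $p$ (that is, $D_1 = (C \setminus \{p\}, V')$, where $V'$ is $V$ with $p$ removed from every vote), the additional‑candidate set $A = \{p\}$, and whatever auxiliary data the model uses for inserting an $A$‑candidate into a district, chosen so that assigning $p$ to $D_1$ restores exactly the election $(C,V)$. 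This transformation is clearly computable in polynomial time.

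For correctness, observe that since $A = \{p\}$ and there is only one district, the unique candidate‑to‑district assignment is $p \mapsto D_1$, and carrying it out produces precisely the election $(C,V)$. Hence the constructed instance is a yes‑instance of $\cale\dd\crc{}$ if and only if $p$ is a winner of $(C,V)$ under $\cale$, i.e., if and only if $(C,V,p) \in W_\cale$. In particular, when $\cale(C,V) = \emptyset$ there is no valid assignment, which correctly matches $(C,V,p) \notin W_\cale$.

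The only point that needs attention is checking that "re-adding $p$ to $D_1$" genuinely reproduces $(C,V)$ under the paper's formalization of how candidates in $A$ are merged into a district's votes. If those insertion positions are part of the input (as we anticipate), this is immediate; if instead the model keeps votes defined over $C_i \cup A$ but evaluates only the present candidates, then taking $V' = V$ already suffices. Either way no genuine obstacle arises, so this proposition is essentially bookkeeping; its purpose is to set up the contrast between unbounded recampaigning, whose hardness always dominates that of $W_\cale$, and the bounded variant $\cale\dd\crc{\ell}$, whose complexity need not be tied to $W_\cale$ at all.
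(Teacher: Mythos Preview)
Your proposal is correct and matches the paper's approach: set $A=\{p\}$, take the single district $D_1=(C\setminus\{p\},V)$, and observe that the unique assignment recovers the election $(C,V)$. The paper simply keeps $V$ unchanged (your second interpretation of how added candidates are merged into votes) and does not separately treat the $p\notin C$ edge case, but otherwise the arguments are identical.
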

\begin{proof}
    Let $\cale$ be a voting rule and let $(C, V, p)$ be an instance of $W_\cale$. Our polynomial-time reduction constructs the set of additional candidates $A = \{p\}$, and the sole district is $D_1 = (C-A, V)$.
    $(A, D_1) \in \cale\dd\crc{}$ if and only if  $p \in \cale(C, V)$, which by the definition of $W_\cale$ is equivalent to saying that $(A, D_1) \in \cale\dd\crc{}$ if and only if $(C, V, p) \in W_\cale$. Thus the reduction is correct.
\end{proof}

\section{Relationships and Properties}\label{s:relationships}

As we introduce new models for electoral control, it is important to recognize relationships between such models. Understanding those relationships forces us to have a deeper understanding of the models under treatment, but also lessens the burden on researchers. For example, existing reductions may help in providing classification results. On the other hand, viewing the relationships between the languages defined by those problems may be useful in pruning the number of models under consideration; we point to the work of Carleton et al.~\cite{car-cha-hem-nar-tal-wel:j:sct} who found many previously unknown relationships (in terms of set equality and containment for the associated languages) between control problems that had been studied actively for years.

\begin{proposition}\label{prop:containments}
    Let $\cale$ be a voting rule and let $\ell$ and  $\ell'$ be positive integers, such that $\ell < \ell'$. Then
    \begin{enumerate}%
        \item $\cale\dd\crc{\ell} \subseteq \cale\dd\crc{\ell'} \subseteq \cale\dd\crc{}$ and
        \item $\cale\dd\$\crc{\ell} \subseteq \cale\dd\$\crc{\ell'} \subseteq \cale\dd\$\crc{}$.
    \end{enumerate}
    Moreover, there is a natural voting rule under which the above containments are strict. 
\end{proposition}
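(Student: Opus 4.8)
The plan has two independent parts: the chain of containments, which I would dispatch with a short monotonicity argument, and the strictness claim, which I would establish by exhibiting one concrete instance under a single fixed natural rule.

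For the containments, first I would point out that $\cale\dd\crc{\ell}$, $\cale\dd\crc{\ell'}$, and $\cale\dd\crc{}$ all take exactly the same objects as input --- a list of districts $D_1,\ldots,D_k$ together with an additional candidate set $A$ --- and that the three priced problems likewise share a common input (the same data plus a price function $\pi$ and a budget $B$); only the acceptance condition differs, so set containment is the right notion here. Then the containments are immediate: a witnessing assignment $f\colon A\to[k]$ for $\cale\dd\crc{\ell}$ leaves every district in the image of $f$ with at most $\ell\le\ell'$ winners, so $f$ also witnesses membership in $\cale\dd\crc{\ell'}$; and $\cale\dd\crc{}$ imposes no cap on winners at all, so any witness for $\cale\dd\crc{\ell'}$ works there too. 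For the priced chain the argument is verbatim the same, the constraint that $f$ cost at most $B$ being independent of the winner cap and simply carried along.

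For strictness I would take $\cale$ to be plurality (a natural nontrivial pure scoring rule) and, for arbitrary positive integers $\ell<\ell'$, construct a single-district instance with $A=\{a\}$ whose unique admissible assignment produces exactly $\ell'$ tied top-scorers. Concretely, let $D_1$ have candidates $b_1,\ldots,b_{\ell'-1}$ and $\ell'$ voters, where voter $j$ (for $j<\ell'$) ranks $b_j$ first and the last voter ranks $a$ first, the remainder of each ballot being in any fixed order (as in the reduction just given, a district's ballots may refer to candidates of $A$ that are not among its current candidates). Since there is only one district, the assignment $a\mapsto D_1$ is forced, and after placing $a$ each of the $\ell'$ candidates present has exactly one plurality point, so $a$ wins and the district has exactly $\ell'$ winners. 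Hence this instance lies in $\cale\dd\crc{\ell'}\setminus\cale\dd\crc{\ell}$. Rerunning the construction with one extra candidate and one extra voter yields $\ell'+1$ tied winners and hence an instance in $\cale\dd\crc{}\setminus\cale\dd\crc{\ell'}$. Attaching the trivial price $\pi(1,a)=1$ and budget $B=1$ ports both separations to the priced variants, giving $\cale\dd\$\crc{\ell}\subsetneq\cale\dd\$\crc{\ell'}\subsetneq\cale\dd\$\crc{}$.

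I do not expect a genuine obstacle here. The only places calling for care are making sure the variants really do share an instance set (so that ``$\subseteq$'' is a statement about languages rather than about the existence of reductions) and tuning the plurality example so that the forced assignment creates \emph{exactly} $\ell'$ (respectively $\ell'+1$) winners --- no more and no fewer. Beyond that the whole proof is bookkeeping and should be short.
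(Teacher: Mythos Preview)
Your containment argument is essentially identical to the paper's: both observe that a witnessing assignment for the stricter winner cap automatically satisfies the looser cap, and the priced case carries over verbatim.

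For strictness, your approach is correct but differs from the paper's. The paper exploits the property that many natural rules (e.g., $t$-approval) satisfy $\cale(C,\emptyset)=C$, and then simply takes a single \emph{empty} district $D_1=(\emptyset,\emptyset)$ together with $A_n=\{a_1,\ldots,a_n\}$; placing all of $A_n$ into $D_1$ yields exactly $n$ winners, so choosing $n=\ell'$ and $n=\ell'+1$ gives the two separations with no vote engineering at all. Your construction instead keeps $|A|=1$ and manufactures an $\ell'$-way plurality tie by populating the district with $\ell'-1$ existing candidates and $\ell'$ carefully chosen ballots. Both work; the paper's version is shorter and sidesteps the need to argue that ballots may reference not-yet-added candidates, while yours has the minor advantage of using a nonempty, ``ordinary-looking'' election. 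Either is fine for the purpose of the proposition.
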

\begin{proof}
    Let $\cale$ be a voting rule and let $\ell$ and $\ell'$ be two positive integers such that $\ell < \ell'$.
    Let $\cali \in \cale\dd\crc{\ell}$. Then by definition, each district where an additional candidate wins has at most $\ell < \ell'$ winners, so $\cali$ is in $\cale\dd\crc{\ell'}$ and also $\cale\dd\crc{}$ since the latter problem places no bounds on the number of winners. The same argument establishes the corresponding relationships in the priced setting.
    Let $\cale'$ be a voting rule such that for any finite set of candidates $C$, $\cale'(C, \emptyset) = C$. Many natural voting rules, such as $t$-approval for all $t \geq 1$, satisfy that condition.

    For each $n \geq 1$, let $\cali_n = (A_n, D_1)$, where $A_n = \{a_1,\ldots, a_n\}$ and $D_1 = (C_1, V_1)$ is an empty election, i.e., both $C_1$ and $V_1$ are empty collections. Then it follows that $I_{\ell'} \in \cale'\dd\crc{\ell'}$ but not in $\cale'\dd\crc{\ell}$, and $I_{\ell'+1} \in \cale'\dd\crc{}$, but not in $\cale'\dd\crc{\ell'}$. For the corresponding results in the priced setting, set all prices to 1 and let the budget be $\card{A_n}$, for each $n \geq 1$.
\end{proof}

A natural property of voting rules that 
permeates both pure social choice theory and computational social choice
is the notion of resoluteness, i.e., the property that a voting rule does not elect more than one candidate as winner on any input.\footnote{Some authors may require voting rules to always output one winner, in which case resoluteness ensures \emph{exactly} one winner when the candidate set is nonempty.} It is a natural notion to consider as often one wishes to elect at  most one winner. Examples of resolute voting rules include Condorcet~\cite{con:b:condorcet-paradox} and Ranked Pairs~\cite{tid:j:clones-dodgson,hem-lav-men:j:schulze-and-ranked-pairs}. However, it is not unusual to consider voting rules that elect at most $k$ winners, especially in the context of multiwinner voting rules (see~\cite{rot:b:econ-second-edition} for a comprehensive overview). We generalize that definition and prove related results as they concern our new models.

\begin{definition}[\textsc{$k$-Resolute}]
    For each integer $k \geq 1$ and voting rule $\cale$,
    $\cale$ is said to satisfy the {\rm \textsc{$k$-Resolute}} condition if $\cale$ never elects more than $k$ candidates as winners.
\end{definition}

\begin{proposition}\label{prop:k-resolute}
    For each voting rule $\cale$ and positive integer $L$, if $\cale$ is \textsc{$L$-Resolute}, then for each $\ell \geq L$
    $\cale\dd\crc{\ell} = \cale\dd\crc{}$
    and $\cale\dd\$\crc{\ell} = \cale\dd\$\crc{}$.
\end{proposition}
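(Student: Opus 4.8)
The plan is to establish each claimed equality as a pair of containments. One direction is immediate: by Proposition~\ref{prop:containments}, $\cale\dd\crc{\ell} \subseteq \cale\dd\crc{}$ and $\cale\dd\$\crc{\ell} \subseteq \cale\dd\$\crc{}$ for every positive integer $\ell$ (in particular for $\ell \geq L$), so all of the work lies in proving the reverse containments $\cale\dd\crc{} \subseteq \cale\dd\crc{\ell}$ and $\cale\dd\$\crc{} \subseteq \cale\dd\$\crc{\ell}$ under the hypothesis that $\cale$ is \textsc{$L$-Resolute} and $\ell \geq L$.

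For the unpriced case, I would take an arbitrary yes-instance $\cali$ of $\cale\dd\crc{}$ with districts $D_1,\ldots,D_k$ and additional candidates $A$, and fix a witnessing assignment $f$ of the candidates in $A$ to districts under which every $a \in A$ wins the district it is assigned to. The claim is that the \emph{same} assignment $f$ witnesses $\cali \in \cale\dd\crc{\ell}$. The only additional obligation imposed by $\cale\dd\crc{\ell}$ is that every district electing a candidate of $A$ has at most $\ell$ winners. But once the candidates assigned to it by $f$ are added, district $D_i$ is still an $\cale$ election, so evaluating $\cale$ on it yields at most $L$ winners because $\cale$ is \textsc{$L$-Resolute}; since $L \leq \ell$, that district has at most $\ell$ winners. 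Hence $f$ satisfies the bounded-winner constraint automatically, and $\cali \in \cale\dd\crc{\ell}$.

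The priced case follows by the identical argument with the price function $\pi$ and budget $B$ carried along unchanged: a witnessing assignment for $\cale\dd\$\crc{}$ already respects $\pi$ and $B$ by definition, and by the \textsc{$L$-Resolute} observation above it also meets the per-district winner bound, so it witnesses $\cali \in \cale\dd\$\crc{\ell}$. Combining the two directions gives both equalities.

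I do not anticipate a real obstacle here; the only point that warrants care is making sure \textsc{$L$-Resolute}-ness is applied to the \emph{post-assignment} districts rather than the original ones, i.e., observing that adding the candidates of $A$ to a district (and extending its votes) produces an election on which $\cale$ is still defined so that the resoluteness bound applies. Everything else is routine bookkeeping over the definitions of the two problems.
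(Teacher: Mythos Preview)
Your proposal is correct and follows essentially the same approach as the paper: both invoke Proposition~\ref{prop:containments} for the forward containment and, for the reverse, take a witnessing assignment for the unbounded problem and observe that \textsc{$L$-Resolute}-ness forces every (post-assignment) district to have at most $L \leq \ell$ winners, so the same assignment witnesses the bounded problem. Your write-up is slightly more explicit about applying resoluteness to the post-assignment elections, but the argument is the same.
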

\begin{proof}%
    Let $L$ be a positive integer, let $\cale$ be a \textsc{$L$-Resolute} voting rule, and let $\ell \geq L$.
    The $\subseteq$ relationship follows from Proposition~\ref{prop:containments}.
    Let $\cali' \in \cale\dd\crc{}$.
     Because $\cale$ is \textsc{$L$-Resolute}, the number of winners in a district is at most $\ell$, so unbounded recampaigning simplifies to $\ell$-recampaigning. Therefore using the same assignment of additional candidates to districts it follows that $\cali' \in \cale\dd\crc{\ell}$.
     The same argument also establishes 
     that $\cale\dd\$\crc{\ell} = \cale\dd\$\crc{}$.
\end{proof}

We now switch gears and focus on interreducibilities, in terms of polynomial-time many-one reductions.

\begin{observation}\label{observation:trivial-relationships}
    For each voting rule $\cale$ and $\ell \geq 1$,
    \begin{enumerate*}[label=]
        \item $\cale\dd\crc{\ell} \leq_m^p \cale\dd\$\crc{\ell}$ and
        \item $\cale\dash\crc{} \leq_m^p \cale\dash\$\crc{}$.
    \end{enumerate*}
\end{observation}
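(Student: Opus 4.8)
The plan is to give one reduction template that handles both parts at once, exploiting the fact that each priced problem is obtained from its unpriced counterpart merely by adjoining a budget constraint: if the budget is made large enough, that constraint becomes vacuous. Given an instance of $\cale\dd\crc{\ell}$ consisting of districts $D_1, \ldots, D_k$ and a set $A$ of $n$ additional candidates, I would output the very same districts and set $A$, together with the price function $\pi(i, a) = 1$ for every $i \in [k]$ and every $a \in A$, and the budget $B = n$. The identical construction, applied to an instance of $\cale\dd\crc{}$, yields the second reduction.

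The correctness argument rests on a single observation about the model: an assignment of candidates to districts places each $a \in A$ into exactly one district, so under the unit price function every valid assignment incurs total cost exactly $\sum_{a \in A} 1 = n = B$. Hence the budget constraint $\le B$ holds for every assignment and imposes nothing. Therefore a given assignment makes every $a \in A$ a winner (with each district electing a candidate of $A$ having at most $\ell$ winners, in the bounded case) if and only if that same assignment witnesses membership in the priced problem, so the two instances are equivalent. The map is computable in polynomial time, since it only appends a constant-description price function and a single integer to the input.

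I do not expect a genuine obstacle here; the only subtlety is the modeling convention that an assignment is total, i.e., that every candidate in $A$ is placed in some district, which is precisely what pins the total cost to the fixed value $n$ regardless of which assignment is chosen. Even if partial assignments were permitted, the same choice $B = n$ would still suffice, since under unit prices no assignment can cost more than $n$.
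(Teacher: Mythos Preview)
Your proposal is correct and matches the paper's intended approach. The paper states this as an observation without proof (noting only that the reductions are even logspace computable), but your unit-price-and-budget-$n$ construction is exactly the natural one, and indeed the paper uses the identical trick (``set all prices to 1 and let the budget be $\card{A}$'') when handling the priced setting in the proof of Proposition~\ref{prop:containments}.
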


    The above reductions are in fact logspace computable, but for our purposes, polynomial-time computability suffices.

While Observation~\ref{observation:trivial-relationships} tells us that there are trivial reductions from an unpriced problem to its corresponding priced version, the next result shows that the other direction does not necessarily hold. The crux of the result is to encode information about hard problems into the price function.
We also show there is no general interreducibility between bounded and unbounded recampaigning.
The proof of the following result leverages results proven later in this paper, so we encourage readers to skip the next proof until after reading Sections~\ref{s:worst-case} and~\ref{s:parametrized}.

\begin{proposition}\label{prop:priced-unpriced}
The following statements are equivalent:
\begin{enumerate}
    \item $\P \neq \NP$.
    \item There is a polynomial-time computable voting rule such that 
    priced 
    recampaigning does not reduce to unpriced 
    recampaigning.
    \item There is a polynomial-time computable voting rule such that bounded recampaigning does not reduce to unbounded recampaigning.
    \item There is a polynomial-time computable voting rule such that unbounded recampaigning does not reduce to bounded recampaigning.
\end{enumerate}
\end{proposition}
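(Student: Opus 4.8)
The plan is to prove the four-way equivalence by treating statement~(1) as a hub: I would establish $(1)\Leftrightarrow(2)$, $(1)\Leftrightarrow(3)$, and $(1)\Leftrightarrow(4)$ separately rather than chase a cycle, since the direct implications between $(2)$, $(3)$, and $(4)$ are not natural but each of them is cheaply related to $(1)$.

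For the three forward implications $(1)\Rightarrow(2)$, $(1)\Rightarrow(3)$, $(1)\Rightarrow(4)$, the idea is to point to concrete polynomial-time computable voting rules---built in Sections~\ref{s:worst-case} and~\ref{s:parametrized}---that provably separate the relevant pair of problems: for $(1)\Rightarrow(2)$, a rule $\cale$ for which some unpriced recampaigning variant lies in $\P$ while the corresponding priced variant is $\NP$-hard, with the hardness injected through the price function; for $(1)\Rightarrow(3)$, a rule with $\cale\dd\crc{}\in\P$ but $\cale\dd\crc{\ell}$ $\NP$-hard for some $\ell$; and for $(1)\Rightarrow(4)$, a rule with $\cale\dd\crc{\ell}\in\P$ for some $\ell$ but $\cale\dd\crc{}$ $\NP$-hard (the earlier proposition $W_\cale\leq_m^p\cale\dd\crc{}$ already shows that hardness naturally flows into the unbounded variant, which makes this the ``expected'' shape of separation). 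In each case I would then argue: supposing $\P\neq\NP$, if the harder of the two problems many-one reduced to the easier one, then---since $\P$ is closed under $\leq_m^p$---the $\NP$-hard problem would lie in $\P$, forcing $\P=\NP$, a contradiction; hence the non-reduction asserted in $(2)$ (resp.\ $(3)$, resp.\ $(4)$) holds, witnessed by that rule.

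For the reverse implications I would argue the contrapositive of each simultaneously: assuming $\P=\NP$, I want to deny all of $(2)$, $(3)$, $(4)$, i.e., show that for \emph{every} polynomial-time computable voting rule $\cale$ and every $\ell\ge1$ the relevant reductions all exist. The two observations that do the work are: (i) each of $\cale\dd\crc{}$, $\cale\dd\crc{\ell}$, $\cale\dd\$\crc{}$, $\cale\dd\$\crc{\ell}$ is in $\NP$, since an assignment of $A$ to districts is a polynomial-size certificate and, because $\cale$ is polynomial-time computable, one can check in polynomial time that every $a\in A$ wins and that the winner bound and the budget are respected---so under $\P=\NP$ all four are in $\P$; and (ii) each of these languages is nontrivial, because the degenerate instance with $A=\emptyset$ is always a yes-instance (the empty assignment vacuously satisfies every condition and has cost $0\le B$) while, under the usual convention that malformed strings are rejected, a no-instance exists. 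Since every language in $\P$ many-one reduces in polynomial time to every nontrivial language, all of the reductions denied by $(2)$, $(3)$, and $(4)$ in fact exist, so none of those statements can hold; this closes the equivalence.

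The genuine obstacle is concentrated entirely in the forward direction, and it is really an obstacle for Sections~\ref{s:worst-case} and~\ref{s:parametrized} rather than for this proof: one must produce polynomial-time computable voting rules under which the winner count (bounded versus unbounded) or the presence of prices demonstrably changes the complexity of recampaigning---in particular a rule whose unbounded variant is tractable but whose bounded variant is $\NP$-hard, which cuts against the intuition one might carry over from standard control problems. This is exactly why the statement flags that its proof depends on later material; once those separations are in hand, the reductions and contradictions above are routine, and the only point needing a little care is the triviality bookkeeping in item~(ii).
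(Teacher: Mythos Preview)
Your proposal matches the paper's architecture: hub at~(1), reverse implications via ``all variants lie in $\NP$,'' forward implications via witness rules exhibiting complexity gaps; your handling of nontriviality is in fact more careful than the paper's one-line dismissal. One correction: for $(1)\Rightarrow(2)$ and $(1)\Rightarrow(3)$ the witness rules are \emph{not} sitting ready-made in Sections~\ref{s:worst-case}--\ref{s:parametrized} but are constructed ad hoc inside this very proof---the paper sets $\cale_1(C,V)=C$ iff $|C|=3$ (else $\emptyset$) for~(2), and $\cale_2(C,V)=C$ if $|C|\ge4$ else the $1$-approval winners for~(3)---with the later sections supplying only the reusable hardness gadgets (the X3C encoding via prices, and the construction of Theorem~\ref{theorem:3-and-above}). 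None of the natural rules analyzed later exhibits the required priced-vs-unpriced or unbounded-easy/bounded-hard gap, so your plan must include synthesizing such artificial rules rather than merely citing them.

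For $(1)\Rightarrow(4)$ you and the paper genuinely diverge. You propose a direct classical separation---some $\ell$ with $\cale\dd\crc{\ell}\in\P$ but $\cale\dd\crc{}$ $\NP$-hard---and this is indeed available from Section~\ref{s:worst-case} alone: take $1$-approval with $\ell=1$, combining Theorem~\ref{theorem:crc-1} (so $\crc{1}\in\P$) with Theorem~\ref{theorem:3-and-above} (so $\crc{}$ is $\NP$-complete). The paper instead detours through parameterized complexity, citing that bounded recampaigning is $\FPT$ and unbounded is para-$\NP$-hard with respect to the number of districts (Theorems~\ref{theorem:fpt-districts} and~\ref{theorem:para-np-hard}). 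Your route is simpler and arguably tighter: the paper's argument tacitly needs the hypothetical $\leq_m^p$ reduction to respect the district parameter---something a general polynomial-time many-one reduction need not do---whereas your classical $\P$-versus-$\NP$-hard contradiction requires no such assumption.
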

\begin{proof}
    Since all of our recampaigning problems are in $\NP$ when the underlying voting rule is polynomial time computable, it follows that (2), (3), and (4) imply (1). 
    It remains to show that (1) implies~(2), (3), and (4).

    \noindent{\boldmath$(1) \implies (2)$}.
    
    Let $\cale_1$ be a voting rule that on input $(C, V)$ outputs $C$ if $\card C = 3$, and outputs $\emptyset$ otherwise. 
    
    We give a simple algorithm to show that $\cale_1\dd\crc{3} \in \P$.
    Given an input $(A, D_1, \ldots, D_k)$, 
    for each $i \in [3]$ let $N_i$ denote the number of districts with exactly $3-i$ candidates, i.e., the subscript $i$ denotes the ``slack'' or the ``number of candidates that must be added'' in that district.
    Then $(A, D_1, \ldots, D_k)$ is a Yes instance of $\cale_1\dd\crc{3}$ if and only if there exists nonnegative integer values $\alpha$, $\beta$, $\gamma$ such that $\alpha \leq N_1$, $\beta \leq N_2$, $\gamma \leq N_3$, and $\alpha + 2\beta + 3\gamma = \card{A}$.
    Since each $N_i$ is at most $k$, we can check all possible assignments of values to $\alpha$, $\beta$, and $\gamma$ in $k^3$ iterations. Thus $\cale_1\dd\crc{3} \in \P$.

    On the other hand, we show that $\cale_1\dd\$\crc{3}$ is \NP-complete by giving a reduction for the Exact Cover by 3-Sets (X3C) problem, a standard \NP-complete problem~\cite{kar:b:reducibilities}.

    \LANGDEF{Exact Cover by 3-Sets (X3C)}{A nonempty universe of elements $\calu = \{u_1, \ldots, u_{3m}\}$ and a nonempty collection $\cals = \{S_1, \ldots, S_k\}$ of 3-element subsets of $\calu$.}{Is there a set $\cals' \subseteq \cals$ of size $m$ such that every element of $\calu$ is contained in some element of $\cals'$?}

    The reduction proceeds as follows. Given an arbitrary instance $(\calu, \cals)$ of X3C, we create for each $S_i$ a district $D_i$ with no voters and no candidates. There are thus $k$ districts. We also let $A = \calu$. We then set the price function $\pi$ as follows. For each $i \in [k]$ and $a \in A$, set $\pi(i, a) = 1$ if $a \in S_i$ and set $\pi(i, a) = 3m+1$ otherwise. Finally, set the budget $B = 3m$.

    Consider an assignment of candidates $A_i$ to district $D_i$. Then $\cale_1(C_i \cup A_i, V_i)$ elects at most three candidates and elects every candidate in $A_i$ if and only if $A_i = S_i$. In other words, a successful assignment of candidates to districts so that every additional candidate is a winner corresponds exactly to $m$ 3-sets of additional candidates that correspond uniquely to elements to $\cals'$, i.e., an exact cover of $\calu$ by 3-sets in $\cals$.

    Moreover, notice that $\cale_1$ satisfies the 3-\textsc{Resolute} condition, so $\cale_1\dd\crc{3} = \cale_1\dd\crc{}$ and $\cale_1\dd\$\crc{3} = \cale_1\dd\$\crc{}$ by Proposition~\ref{prop:k-resolute}. Thus priced bounded/unbounded recampaigning does not reduce to its unpriced version.

    \noindent{\boldmath$(1) \implies (3)$}.

    Define $\cale_2$ to be a voting rule whose vote type is linear orders over candidates that takes as input $(C, V)$ and outputs $C$ if $\card{C} \geq 4$, and otherwise outputs the 1-approval winner(s) of $(C, V)$. Using the same construction as in the proof of Theorem~\ref{theorem:3-and-above}, $\cale_2\dd\crc{3}$ and $\cale_2\dd\$\crc{3}$ are both $\NP$-complete. 

        On the other hand, $\cale_2\dd\crc{}$ is in $\P$. Indeed, consider an arbitrary instance of $(A, D_1, \ldots, D_k)$. If $\card{A} \geq 4$, then place all additional candidates in any district, as all candidates will win. If $\card{A} \leq 3$, then there are at most $k^3$ assignments of additional candidates to districts, which can all be checked in polynomial time.\footnote{Though we do not give a proof, using the same approach as in the proof of Proposition~\ref{prop:priced-unpriced} yields the NP-completeness of $\cale_2\dd\$\crc{}$. So this construction does not show the lack of a reduction from unpriced/unpriced bounded recampaigning to priced unbounded recampaigning. However, that is beyond the scope of our proposition's claim. It would nonetheless be an interesting case to explore.}

    \noindent{\boldmath$(1) \implies (4)$}.
    
    This follows directly from Theorems~\ref{theorem:fpt-districts} and~\ref{theorem:para-np-hard}, 
    and the fact that $\FPT = {\rm para}\dd\NP$-hard if and only $\P=\NP$~\cite{flu-gro:b:parameterized-complexity}.
    \end{proof}

We next consider another family of conditions on voting rules, which we will appeal to in order to characterize the complexity of 1-recampaigning in Section~\ref{s:worst-case}.

\begin{definition}
    For each $k \geq 1$, 
    a voting rule $\cale$ is said to satisfy the $k$-{\rm \textsc{Winner}} condition if there is a polynomial-time algorithm that given an $\cale$ election $(C, V)$ and $p\in C$, accepts if $p \in \cale(C, V)$ and $\card{\cale(C, V)} \leq k$, and rejects otherwise.
\end{definition}

The $k$-\textsc{Winner} condition may seem to imply something strong about the computability of $\cale$ in order to satisfy $k$-\textsc{Winner}, such as polynomial-time computability. Naturally if $\cale$ is polynomial-time computable, then for each $k\geq 0$, it satisfies $k$-\textsc{Winner}. However, the converse is not true. Indeed, the only guarantee is that winner determination is polynomial-time computable on \emph{some} inputs. As the next result demonstrates, one can design voting rules that are hard to compute and yet satisfy $k$-\textsc{Winner}.

\begin{proposition}
    For each $k \geq 1$, 
    there is an uncomputable voting rule $\cale$ that satisfies $k$-{\rm \textsc{Winner}}.
\end{proposition}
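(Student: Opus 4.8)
The plan is to plant an undecidable set into the behaviour of $\cale$ on a single ``slice'' of elections --- those whose candidate set has size exactly $k+1$ --- while letting $\cale$ elect \emph{everyone} on all other elections. The observation that makes $k$-\textsc{Winner} survive is the following: on the planted slice, $\cale(C,V)$ will be either all of $C$ (which has $k+1 > k$ winners) or $\emptyset$ (which does not contain the queried candidate $p$), so in both cases a $k$-\textsc{Winner}-checker is required to \emph{reject}; hence it can reject on the entire slice without ever deciding which of the two cases holds, and that undecided question is exactly what carries the undecidability of $W_\cale$.

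Concretely, first I would fix an undecidable language $L$ each of whose members encodes a candidate set of size exactly $k+1$; such an $L$ exists by a routine reduction from the halting problem (e.g.\ declare $C \in L$ iff the lexicographically least name in $C$ encodes a Turing machine halting on empty input, using a fixed computable way of padding any single encoding string up to a size-$(k+1)$ candidate set in which it is lex-least). Then define the voting rule $\cale$ by $\cale(C,V) = C$ whenever $\card{C} \neq k+1$, and, for $\card{C} = k+1$, $\cale(C,V) = C$ if $C \in L$ and $\cale(C,V) = \emptyset$ otherwise. This is manifestly a voting rule: it always returns a subset of $C$, and it ignores $V$. Next I would check that $\cale$ is uncomputable --- any algorithm computing $\cale(C,\emptyset)$ on candidate sets $C$ with $\card{C} = k+1$ would reveal whether the output is $C$ or $\emptyset$, i.e.\ whether $C \in L$, contradicting the undecidability of $L$.

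Finally I would exhibit the $k$-\textsc{Winner} algorithm: on input $(C,V)$ and $p \in C$, \emph{accept iff $\card{C} \le k$}. Correctness is a three-way case analysis. If $\card{C} \le k$, then $\cale(C,V) = C$, so $p \in \cale(C,V)$ and $\card{\cale(C,V)} = \card{C} \le k$, and the algorithm correctly accepts. If $\card{C} = k+1$, then $\cale(C,V)$ equals $C$, which has $k+1 > k$ winners, or $\emptyset$, which excludes $p$; in either case the required verdict is ``reject'', and the algorithm rejects. If $\card{C} \ge k+2$, then $\cale(C,V) = C$ has more than $k$ winners, so again the required verdict is ``reject'', and the algorithm rejects. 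The algorithm inspects only $\card{C}$, so it runs in polynomial time. The one genuinely load-bearing step is the middle case of this analysis --- the ``both possible answers force a reject, for different reasons'' phenomenon on the planted slice; everything else (producing an undecidable $L$, verifying $\cale$ is a voting rule, and bounding the running time) is routine.
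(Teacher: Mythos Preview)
Your proof is correct and follows essentially the same approach as the paper: define $\cale$ to elect either all candidates or none, arranged so that whenever $\card{C}\le k$ the output is $C$, and so that the $k$-\textsc{Winner} algorithm can simply test $\card{C}\le k$; the undecidability is planted in the ``all vs.\ none'' choice when $\card{C}>k$. The only difference is cosmetic: the paper plants the undecidability via an undecidable subset $L\subseteq\naturals$ and the test $\card{C}\in L$ (applied whenever $\card{C}>k$), whereas you concentrate it on the single slice $\card{C}=k+1$ via a more elaborate encoding of candidate sets; the paper's version is a bit cleaner, but the load-bearing idea is identical.
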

\begin{proof}

    Let $L$ be an undecidable subset of $\naturals$.  We define the voting rule $\cale$ as follows

        \begin{align*}
        \cale(C, V) = 
        \begin{cases}
            C & \text{ if } \card{C} \leq k \text{ or } \card{C} \in L\\
            \\
            \emptyset & \text{ otherwise}.
        \end{cases}
    \end{align*}

    It is easy to see verify $\cale$ is not computable, and yet it satisfies $k$-Winner. A candidate is one of $k$ winners if and only if there are at most $k$ candidates in the election, which can be checked in polynomial time.
    Moreover, if $\cale$ were computable, then 
    $L$ would be decidable,
    which is a contradiction.
\end{proof}

Finally, we connect the $k$-\textsc{Winner} condition to recampaigning as follows.

\begin{lemma}\label{lemma:p-to-k-winner}
    For each voting rule $\cale$ and $\ell \geq 1$, if $\cale\dd\crc{\ell} \in \P$, then $\cale$ satisfies $\ell$-{\rm\textsc{Winner}}.
\end{lemma}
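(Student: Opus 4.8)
The plan is to prove the contrapositive-flavored statement directly: assuming $\cale\dd\crc{\ell} \in \P$, exhibit a polynomial-time algorithm witnessing that $\cale$ satisfies the $\ell$-\textsc{Winner} condition. Recall that $\ell$-\textsc{Winner} requires an algorithm that, given an $\cale$ election $(C, V)$ and a candidate $p \in C$, accepts exactly when $p \in \cale(C, V)$ and $\card{\cale(C, V)} \leq \ell$. The natural idea is to reduce a single $\ell$-\textsc{Winner} query to a membership query for $\cale\dd\crc{\ell}$, which we can answer in polynomial time by hypothesis.

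The key construction is essentially the one used in the proof that $W_\cale \leq_m^p \cale\dd\crc{}$: given $(C, V)$ and $p \in C$, form a single district $D_1 = (C - \{p\}, V')$ where $V'$ is $V$ with $p$ removed from each vote, and set the additional-candidate set to $A = \{p\}$. Then an assignment of $A$ to districts is forced (there is only one district and one candidate), and the instance $(A, D_1)$ is a Yes instance of $\cale\dd\crc{\ell}$ precisely when $p$ is a winner of $(C, V)$ \emph{and} that district has at most $\ell$ winners, i.e., $\card{\cale(C,V)} \le \ell$. So the algorithm for $\ell$-\textsc{Winner} is: build $(A, D_1)$ in polynomial time, run the assumed polynomial-time decision procedure for $\cale\dd\crc{\ell}$ on it, and accept iff it accepts. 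The running time is polynomial since the construction is polynomial-time and we make one call to a polynomial-time procedure.

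The main subtlety — and the step I expect to require the most care — is making sure the reconstructed district $(C - \{p\}, V')$ together with re-adding $p$ actually yields back the original election $(C, V)$, so that the winner set and its cardinality are exactly those of $(C, V)$. Concretely, one must argue that deleting $p$ from each vote and then inserting $p$ (as the recampaigning model does when it adds a candidate to a district) recovers the original votes; this depends on how the model specifies candidate insertion into existing ballots. If the model's ``add a candidate to a district'' operation allows $p$ to be placed in any position of each ballot, then the Yes instance really does range over all ways of reinserting $p$, and we need that at least one such reinsertion reproduces $(C,V)$ — which it does, namely the original positions — and that this is the relevant witness. The cleanest phrasing is to note that the single forced assignment makes $(A, D_1)$ a Yes instance iff \emph{there exists} a placement of $p$ making $p$ a winner with at most $\ell$ winners; combined with the preliminary reduction's correctness argument, membership of $(C,V,p)$ in this restricted ``winner-and-bounded'' problem is equivalent, and that is exactly $\ell$-\textsc{Winner}. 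I would state this equivalence explicitly and then conclude.

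Thus the proof reduces to: (i) describe the polynomial-time construction of $(A, D_1)$ from $(C, V, p)$; (ii) observe that the assignment is unique and spell out the iff between $(A, D_1) \in \cale\dd\crc{\ell}$ and ``$p \in \cale(C,V)$ with $\card{\cale(C,V)} \le \ell$,'' being careful about the ballot-reconstruction point above; (iii) conclude that composing this construction with the assumed $\P$ algorithm for $\cale\dd\crc{\ell}$ gives the required polynomial-time algorithm for $\ell$-\textsc{Winner}.
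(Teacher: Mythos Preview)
Your approach matches the paper's: build a single-district instance with $A=\{p\}$ and feed it to the assumed $\P$ decider for $\cale\dd\crc{\ell}$. The paper's proof is precisely this, in two sentences.

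The ``main subtlety'' you flag, however, stems from a misreading of the model and is not a real issue. In the paper's recampaigning model, the votes $V_i$ in district $i$ already rank all candidates in $C_i\cup A$ (look at the $\NP$-hardness constructions, where the listed votes explicitly order the elements of $A$); there is no ``insert a candidate into a ballot'' operation and hence no existential over placements. Accordingly the paper sets $D_1=(C-\{p\},V)$ with the \emph{original} $V$, not a projected $V'$: re-adding $p$ to this district yields literally the election $(C,V)$, so the equivalence ``$(A,D_1)\in\cale\dd\crc{\ell}$ iff $p\in\cale(C,V)$ and $\card{\cale(C,V)}\le\ell$'' is immediate. Your $V'$ is unnecessary and, taken literally, produces an ill-formed instance (the district votes would fail to rank $p\in A$). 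Replace $V'$ by $V$ and your step~(ii) collapses to a one-line observation.
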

\begin{proof}
    Let $\cale$ be a voting rule,  let $\ell \geq 1$, and
    suppose $\cale\dd\crc{\ell}$ is in~$\P$. We give below a polynomial-time algorithm that checks if a given candidate in a $\cale$ election is one of $k$ winners.
    
    Let $(C, V)$ be a $\cale$ election and let $p \in C$ be inputs to our polynomial-time algorithm. We construct the following $\cale\dd\crc{\ell}$ instance. There is one district $D_1 = (C-\{p\}, V)$ and the set of additional candidates is $A = \{p\}$. Thus $(D_1, A) \in \cale\dd\crc{\ell}$ if and only if $p \in \cale(C, V)$ and $\card{\cale(C, V)} \leq \ell$, which can be checked in polynomial-time.
\end{proof}

\section{Worst-Case Complexity}\label{s:worst-case}

We first focus on the worst-case complexities of our problems with respect to natural voting rules. In particular, we completely characterize the complexity of 1-recampaigning for all voting rules, we completely determine the complexity of 2-recampaigning for pure scoring rules, and for all other types of recampaigning, we establish hardness results for the $t$-approval and $t$-veto families of scoring rules. We also establish easiness results for certain voting rules.

\subsection{1-Recampaigning}

In this subsection, we characterize the conditions on voting rules for which 1-recampaigning is in P\@. To do so, we reframe our problem in terms of the Minimum-Cost Unbalanced Assignment problem, which has a polynomial-time algorithm~\cite{ram-tar:t:unbalanced-assignment}.

\begin{lemma}\label{lemma:simpleuw-to-p}
    For each voting rule $\cale$ that satisfies the {\rm \textsc{1-Winner}} condition, both 
    $\cale\dd\crc{1}$ and $\cale\dd\$\crc{1}$ are in~$\P$.
\end{lemma}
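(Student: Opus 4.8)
The plan is to recast a valid recampaigning assignment as a bipartite matching and then invoke the polynomial-time algorithm for Minimum-Cost Unbalanced Assignment~\cite{ram-tar:t:unbalanced-assignment}. The first step is the structural observation that drives everything: when $\ell = 1$, any valid assignment sends distinct candidates of $A$ to distinct districts. Indeed, if two distinct $a, b \in A$ were both assigned to the same district $D_i = (C_i, V_i)$, then the resulting election $(C_i \cup \{a,b\}, V_i)$ would have to elect both $a$ and $b$ and hence have at least two winners, contradicting the constraint that a district electing an $A$-candidate has at most one winner. Consequently a valid assignment is precisely a matching saturating $A$ in the bipartite graph $G$ with vertex classes $A$ and $\{D_1, \ldots, D_k\}$ in which $\{a, D_i\}$ is an edge exactly when $a$ is the \emph{unique} winner of $(C_i \cup \{a\}, V_i)$, i.e., when $a \in \cale(C_i \cup \{a\}, V_i)$ and $\card{\cale(C_i \cup \{a\}, V_i)} \leq 1$.

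Next I would observe that $G$ is computable in polynomial time: deciding whether $\{a, D_i\}$ is an edge is exactly one call to the polynomial-time algorithm furnished by the $1$-\textsc{Winner} condition, run on the election $(C_i \cup \{a\}, V_i)$ with distinguished candidate $a$, and there are only $nk$ candidate pairs to test. For $\cale\dd\crc{1}$ it then suffices to test, using any polynomial-time bipartite matching algorithm, whether $G$ has a matching saturating $A$; by the previous paragraph this holds iff the instance is a Yes instance, so $\cale\dd\crc{1} \in \P$.

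For the priced problem $\cale\dd\$\crc{1}$ I would weight each edge $\{a, D_i\}$ by $\pi(i,a)$ and compute a minimum-cost matching of $G$ that saturates $A$; this is exactly Minimum-Cost Unbalanced Assignment (the side to be saturated, $A$, being the smaller one), solvable in polynomial time~\cite{ram-tar:t:unbalanced-assignment}. The algorithm accepts iff such a matching exists and its cost is at most $B$, and rejects otherwise (in particular whenever $n > k$, when no matching saturating $A$ can exist). Correctness is immediate: the cost of an assignment is the sum of $\pi(i,a)$ over its assigned pairs, so minimum-cost saturating matchings of $G$ are in cost-preserving correspondence with valid assignments; hence $\cale\dd\$\crc{1} \in \P$. (Alternatively, $\cale\dd\crc{1} \in \P$ also follows from $\cale\dd\$\crc{1} \in \P$ via the reduction of Observation~\ref{observation:trivial-relationships}.)

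The one delicate point is the opening observation, and in particular that under $\ell = 1$ the two conditions ``$a$ wins in its district'' and ``its district has at most one winner'' together collapse to ``$a$ is the unique winner,'' which is precisely the predicate the $1$-\textsc{Winner} algorithm decides; note that no monotonicity of $\cale$ is needed to forbid placing two $A$-candidates in one district --- the winner-count bound does it directly. Everything past that is a black-box call to a known polynomial-time assignment/matching subroutine, so no further obstacle arises.
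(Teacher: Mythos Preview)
Your proposal is correct and takes essentially the same approach as the paper: both reduce to minimum-cost bipartite matching on the graph with edges $\{a,D_i\}$ present exactly when $a$ is the unique winner of $(C_i\cup\{a\},V_i)$, using the \textsc{1-Winner} condition to build this graph in polynomial time. The only cosmetic difference is that the paper handles just the priced case and invokes Observation~\ref{observation:trivial-relationships} for the unpriced one, whereas you treat both explicitly before noting that same shortcut.
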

\begin{proof}
    Let $\cale$ be a voting rule that satisfies the {\rm \textsc{1-Winner}} condition. By Observation~\ref{observation:trivial-relationships}, it suffices to give a polynomial-time algorithm for the priced 1-recampaigning problem.

    In the algorithm we present, we use as a subroutine a polynomial-time algorithm for the minimum-cost unbalanced assignment problem~\cite{ram-tar:t:unbalanced-assignment}.

    \PROBDEF{Minimum-Cost Unbalanced Assignment~\cite{ram-tar:t:unbalanced-assignment}}
    {A bipartite graph $G=(L \cup R, E)$, where $\card L \geq \card R$, and a weight function $w: E \to \naturals$.}
    {A maximum cardinality matching $M$ of $G$ whose weight/cost $w(M)=\sum_{e \in M} w(e)$ is minimized.}
    
    We prove our result by showing that the task of assigning candidates in districts so as to make them unique winners is essentially the same task as performing a minimum-cost unbalanced assignment. Indeed, because the problem at hand requires each additional candidate to be a winner and each district can have at most one winner, it follows that a successful assignment of additional candidates never places two (or more) additional candidates in the same district.
    Thus the problem reduces to finding a subset of districts that can be one-to-one mapped to additional candidates under the restriction that each additional candidate is a unique winner in their district. Moreover, because the assignment is budget-constrained, this corresponds to the ``minimum-cost'' nature of the unbalanced assignment problem.

    Let the input to our algorithm for $\cale\dd\$\crc{1}$ be
    $(D_1, \ldots, D_k, A, \pi, B)$.
    Let $\cald = \{D_1, \ldots, D_k\}$.
    Our algorithm constructs the undirected bipartite graph $G = (A \cup \cald, E)$, where for each $i\in [k]$ and $c \in A$,
    \begin{center}
    $(c, D_i) \in E$ if and only if $\{c\} = \cale(C_i \cup\{c\}, V_i)$,
    \end{center}
    and for each $(c, D_i) \in E$, let $w(c, D_i) = \pi(i, c)$.
    Since $\cale$ satisfies the {\rm \textsc{1-Winner}} condition, it takes polynomial time to check if a pair $(c, D_i)$ should be in $E$, and thus $G$ is polynomial-time computable. Our algorithm then computes a minimum-cost maximal matching $M$ of $G$.
    If $\card{M} = \card{A}$ and $w(M) \leq B$, then the algorithm accepts. Otherwise, the algorithm rejects.

    This works because there is a solution to the $\cale\dd\$\crc{1}$ problem on input $(\cald, A)$ if and only if $G$ has a maximal matching of size $\card{A}$ with weight at most $B$.
    Thus our algorithm runs in polynomial time and is correct. %
\end{proof}

In the following result, for each voting rule $\cale$, we denote by $\P^{\cale}$ the class of languages that are accepted by deterministic polynomial-time machines with a function oracle (i.e., a blackbox function) to compute $\cale$ in one step.

\begin{proposition}
    For each voting rule $\cale$, both 
    $\cale\dd\crc{1}$ and $\cale\dd\$\crc{1}$ are in $\P^{\cale}$.
\end{proposition}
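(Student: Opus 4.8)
The plan is to relativize the proof of Lemma~\ref{lemma:simpleuw-to-p}. The key observation is that, given oracle access to $\cale$ (equivalently, to the winner problem $W_\cale$), \emph{every} voting rule satisfies a relativized form of the \textsc{1-Winner} condition: given a $\cale$ election $(C, V)$ and a candidate $p \in C$, one can query the oracle whether $q \in \cale(C,V)$ for each of the at most $\card C$ candidates $q$, and thereby compute $\card{\cale(C,V)}$ and decide whether $\{p\} = \cale(C,V)$ in polynomial time. Since $\cale(C,V) \subseteq C$ is always of polynomial size, this also shows that treating $\cale$ itself as the oracle (returning the whole winner set) is equivalent, up to a polynomial factor, to using $W_\cale$. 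The only place the \textsc{1-Winner} hypothesis was used in Lemma~\ref{lemma:simpleuw-to-p} was to make the edge relation $(c, D_i) \in E \iff \{c\} = \cale(C_i \cup \{c\}, V_i)$ polynomial-time computable, so this relativized observation is exactly what we need.

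Concretely, I would first give the algorithm for $\cale\dd\$\crc{1}$. On input districts $\cald = \{D_1, \ldots, D_k\}$ with $D_i = (C_i, V_i)$, additional candidates $A$, price function $\pi$, and budget $B$, the algorithm builds the bipartite graph $G = (A \cup \cald, E)$ with $(c, D_i) \in E$ iff $\{c\} = \cale(C_i \cup \{c\}, V_i)$---each such test performed with $O(\card{C_i})$ calls to the $\cale$ oracle---and assigns edge $(c, D_i)$ weight $\pi(i, c)$. It then runs the polynomial-time Minimum-Cost Unbalanced Assignment subroutine of~\cite{ram-tar:t:unbalanced-assignment} (which needs no oracle) and accepts iff the returned matching has cardinality $\card A$ and weight at most $B$. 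Correctness is verbatim that of Lemma~\ref{lemma:simpleuw-to-p}: since every district admits at most one winner among $A$ and every $a \in A$ must win somewhere, a valid assignment corresponds precisely to a matching of size $\card A$ saturating $A$, and budget-feasibility is the minimum-cost requirement. The whole procedure is polynomial time with a $\cale$ oracle, so $\cale\dd\$\crc{1} \in \P^{\cale}$.

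For $\cale\dd\crc{1}$, I would simply compose the above with the reduction $\cale\dd\crc{1} \leq_m^p \cale\dd\$\crc{1}$ from Observation~\ref{observation:trivial-relationships}, or equivalently run the same algorithm with all prices set to $1$ and budget $\card A$. I do not expect a genuine obstacle here: the argument is a routine relativization, and the only step requiring a sentence of care is pinning down what ``oracle access to $\cale$'' means, which the polynomial bound $\card{\cale(C,V)} \le \card C$ settles cleanly.
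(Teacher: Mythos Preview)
Your proposal is correct and matches the paper's own approach: the paper's proof is a one-sentence sketch observing that in the proof of Lemma~\ref{lemma:simpleuw-to-p} the only step depending on the complexity of $\cale$ is computing the edge set $E$, and that this step can be carried out in polynomial time relative to the (function) oracle $\cale$. Your write-up simply unpacks this sketch with more detail (including the equivalence between the function oracle $\cale$ and the decision oracle $W_\cale$), but the underlying argument is identical.
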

\begin{proof}[Proof (sketch)]
    Notice that in the proof of Lemma~\ref{lemma:simpleuw-to-p}, the only task that depends on the complexity of computing the function $\cale$ is the task of computing $E$, which can be computed in polynomial time relative to the (function) oracle $\cale$.
\end{proof}

%
%
%
%
%
%
%
    
%
%

Taking stock of our results, we give the main theorem of this subsection.

\begin{theorem}\label{theorem:crc-1}
    Let $\cale$ be a voting rule. The following statements are equivalent.
    \begin{enumerate}
        \item $\cale$ satisfies the {\rm \textsc{1-Winner}} condition.
        \item $\cale\dd\$\crc{1} \in \P$.
        \item $\cale\dd\crc{1} \in \P$.
    \end{enumerate}
\end{theorem}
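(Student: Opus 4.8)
The plan is to prove the theorem by closing the cycle of implications $(1) \Rightarrow (2) \Rightarrow (3) \Rightarrow (1)$, drawing entirely on results already established in this subsection and in Section~\ref{s:relationships}. In effect the theorem is a repackaging: all of the substantive work — the reduction of priced $1$-recampaigning to Minimum-Cost Unbalanced Assignment, and the ``one district, one additional candidate'' gadget in the reverse direction — already lives in Lemmas~\ref{lemma:simpleuw-to-p} and~\ref{lemma:p-to-k-winner}.

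For $(1) \Rightarrow (2)$: if $\cale$ satisfies the \textsc{1-Winner} condition, then Lemma~\ref{lemma:simpleuw-to-p} directly gives $\cale\dd\$\crc{1} \in \P$ (the lemma in fact also gives $\cale\dd\crc{1} \in \P$, but only the priced statement is needed at this step). For $(2) \Rightarrow (3)$: by Observation~\ref{observation:trivial-relationships} we have $\cale\dd\crc{1} \leq_m^p \cale\dd\$\crc{1}$, and since $\P$ is closed downward under polynomial-time many-one reductions, $\cale\dd\$\crc{1} \in \P$ yields $\cale\dd\crc{1} \in \P$. For $(3) \Rightarrow (1)$: this is exactly the $\ell = 1$ case of Lemma~\ref{lemma:p-to-k-winner}, which says that $\cale\dd\crc{\ell} \in \P$ forces $\cale$ to satisfy $\ell$-\textsc{Winner}.

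I do not anticipate a genuine obstacle. The one point worth checking is simply that the chain $(1)\to(2)\to(3)\to(1)$ really does visit all three statements and therefore establishes the full equivalence, so that no separate implication (e.g.\ a direct $(2) \Rightarrow (1)$) needs its own argument. One could alternatively route the proof as $(1)\Rightarrow(3)$ via Lemma~\ref{lemma:simpleuw-to-p}, $(1)\Rightarrow(2)$ via the same lemma, and $(3)\Rightarrow(1)$ via Lemma~\ref{lemma:p-to-k-winner}, together with the trivial reduction to close the remaining gap; the cyclic presentation above is the shortest. Either way, the proof is short and the content is entirely inherited from the cited lemmas.
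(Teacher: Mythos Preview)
Your proposal is correct and matches the paper's own proof essentially verbatim: the paper closes the same cycle $(1)\Rightarrow(2)$ via Lemma~\ref{lemma:simpleuw-to-p}, $(2)\Rightarrow(3)$ via Observation~\ref{observation:trivial-relationships}, and $(3)\Rightarrow(1)$ via Lemma~\ref{lemma:p-to-k-winner}. There is nothing to add.
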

\begin{proof}
    Lemma~\ref{lemma:simpleuw-to-p} proves that (1) implies (2), Observation~\ref{observation:trivial-relationships} shows that (2) implies (3), and finally Lemma~\ref{lemma:p-to-k-winner} proves that (3) implies (1).
\end{proof}

\subsection{Unbounded and \boldmath$\ell$-Recampaigning, for $\ell > 1$}

We now consider the case where the bound on the number of winners is at least two. We completely determine the complexity of  priced/unpriced 2-recampaigning for all polynomial-time uniform scoring protocols. We show that for all voting rules in the $t$-approval and $t$-veto families of voting rules and all $\ell > 2$, $\ell$-recampaigning and unbounded recampaigning are $\NP$-complete. And we also show, using minimum-weight perfect $b$-matchings for multigraphs (defined later), that bounded/unbounded priced/unpriced recampaigning is in polynomial-time for trivial scoring rules.

\begin{theorem}\label{theorem:trivial-scoring-rule}
    Let $\calr$ be a trivial scoring rule, i.e., a voting rule that elects all candidates as winners. Then for each $\ell \geq 2$, $\ell$-recampaigning under $\calr$ is in $\P$, in the priced and unpriced settings. This also holds in the unbounded recampaigning setting.
\end{theorem}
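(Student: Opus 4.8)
The plan is to exploit the fact that under a trivial scoring rule $\calr$ the winner condition of every recampaigning variant becomes vacuous. If an assignment sends a subset $A_i \subseteq A$ to district $D_i = (C_i, V_i)$, then (regardless of how the model extends the votes of $V_i$ over the new candidates, which is immaterial for $\calr$) we have $\calr(C_i \cup A_i, V_i) = C_i \cup A_i$, so every candidate in $A_i$ is automatically a winner. Hence the only surviving requirements are: in $\ell$-recampaigning, that each district receiving a candidate from $A$ has at most $\ell$ winners, i.e.\ $|C_i| + |A_i| \le \ell$, equivalently $|A_i| \le m_i$ with $m_i := \max(0, \ell - |C_i|)$; and in the priced variants, that the total cost is at most $B$. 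From here I would treat the four problems $\calr\dd\crc{}$, $\calr\dd\$\crc{}$, $\calr\dd\crc{\ell}$, $\calr\dd\$\crc{\ell}$ in turn.

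The unbounded cases are immediate: $\calr\dd\crc{}$ is always a yes-instance, since all of $A$ may be sent to $D_1$; and for $\calr\dd\$\crc{}$, with no capacity constraints the optimum is greedy, so one assigns each $a \in A$ to a district $i$ minimizing $\pi(i,a)$ and accepts iff the total cost is at most $B$. The unpriced bounded problem $\calr\dd\crc{\ell}$ is a yes-instance iff $\sum_{i \in [k]} m_i \ge n$, since the $n$ additional candidates are interchangeable and are constrained only by the per-district capacities $m_i$; this is checkable in polynomial time. (Alternatively, $\calr\dd\crc{\ell} \in \P$ follows from the priced case below together with Observation~\ref{observation:trivial-relationships}.)

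The substantive case is priced bounded recampaigning $\calr\dd\$\crc{\ell}$, which I would reduce to a minimum-weight perfect $b$-matching on a bipartite multigraph. First reject if $\sum_i m_i < n$. Otherwise build a bipartite multigraph whose left part is $A \cup \{d\}$ for a fresh dummy candidate $d$ and whose right part is $\{D_1,\dots,D_k\}$; add one edge $(a, D_i)$ of weight $\pi(i,a)$ for every $a \in A$ and $i \in [k]$, and $m_i$ parallel edges $(d, D_i)$ of weight $0$ for every $i$; and impose the degree requirements $b(a) = 1$ for each $a \in A$, $b(D_i) = m_i$ for each $i$, and $b(d) = \sum_i m_i - n$. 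A perfect $b$-matching then exists (this is where the just-performed feasibility check is used); every perfect $b$-matching projects to a capacity-respecting assignment of $A$ to the districts of the same weight (the dummy $d$ filling the unused slots at zero cost), and conversely every such assignment lifts to one; so one accepts iff the minimum-weight perfect $b$-matching has weight at most $B$. Since minimum-weight perfect $b$-matching on bipartite multigraphs is solvable in polynomial time (being a special case of min-cost flow), this yields the desired polynomial-time algorithm.

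The step I expect to be the main obstacle is getting this last gadget exactly right: one must perform the infeasibility check $\sum_i m_i < n$ before invoking the matching subroutine, and one must faithfully encode the per-district upper-bound capacities $m_i$ as the exact-degree constraints $b(D_i) = m_i$ of a \emph{perfect} $b$-matching --- which is exactly what the dummy candidate $d$ together with its $m_i$ parallel zero-weight edges into each $D_i$ accomplishes. Everything else reduces to the observation that $\calr$'s winner condition imposes no constraint.
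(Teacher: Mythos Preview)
Your proposal is correct and takes essentially the same approach as the paper: the core priced bounded case is handled by the same reduction to minimum-weight perfect $b$-matching with a slack vertex absorbing unused capacity (your dummy $d$ is the paper's vertex $s$, your $m_i$ is the paper's $\Delta_i$, and the $b$-values coincide). The only cosmetic differences are that you dispose of the unpriced and unbounded cases by direct elementary arguments (greedy/trivial), whereas the paper obtains them as degenerate instances of the same $b$-matching reduction, and you perform the infeasibility test $\sum_i m_i < n$ explicitly up front while the paper folds it into $b(s) = \max(0,\sum_i \Delta_i - |A|)$.
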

\begin{proof}

    Let $\ell \geq 2$. We give the algorithm for priced $\ell$-recampaigning, which will immediately imply the truth of the result for unpriced $\ell$-recampaigning. We note in passing that the reader may already see why the unpriced setting has a polynomial-time algorithm: a simple greedy algorithm that allocates additional candidates in each district so to have no more than $\ell$ candidates in each district solves the problem. In the priced setting however, the greedy approach no longer works, and it is easy to devise an example demonstrating that.

    To prove our result for the priced setting, we will appeal to the fact that Min-Weight Perfect $b$-Matching for Multigraphs is in $\P$~\cite{fit-hem:c:weighted-matching}.

    \LANGDEF{Min-Weight Perfect $b$-Matching for Multigraphs~\cite{fit-hem:c:weighted-matching}}
    {A weighted multigraph $G=(V, E)$ with weight function $w: E \to \naturals$, a capacity function $b: V \to \naturals$, and an integer $k \geq 0$.}
    {Does $G$ have a perfect matching of weight at most $k$, i.e., does there exist a set of edges
$M \subseteq E$ of weight at most $k$ such that each vertex $v \in G$ is incident to exactly $b(v)$ edges in $M$?}

    Let $D_1, \ldots, D_k$ be $k$ districts, let $A$ be the set of $n$ additional candidates, let $\pi$ be the price function for the problem, and let $B$ be the budget.

    For each district $D_i$, define $\Delta_i = \max(0, \ell-\card{C_i})$, i.e., the maximum number of additional candidates that can be added to that district.

    We then define the following instance of Min-Weight Perfect $b$-Matching for Multigraphs.
    \begin{itemize}
        \item $G = (V, E)$, where
        \begin{itemize}
        \item $V = A \cup \{d_1, \ldots, d_k\} \cup \{s\}$,
        \item for each $a \in A$ and $i \in [k]$, $E$ contains one edge $(a, d_i)$ with weight $\pi(i, a)$,
        \item for each $i \in [k]$, $E$ contains $\Delta_i$ times the edge $(d_i, s)$ with weight 0,
        \end{itemize}
        \item for each $a \in A$, $b(a) = 1$,
        \item for each $i \in [k]$, $b(d_i) = \Delta_i$,
        \item $b(s) = \max\left(0, \left(\sum_{i=1}^k \Delta_i\right) - \card{A}\right)$,
        \item set the weight limit $k$ to $B$.
    \end{itemize}

    \begin{example}
        Suppose $\ell = 3$.
        Let $A = \{a_1, a_2, a_3\}$, let $D_1$ and have no candidates, let $D_2$ have one candidate, and let $D_3$ have four candidates. Moreover, let $\pi(1, a_1) = 5$, 
        $\pi(1, a_2) = 8$, 
        $\pi(1, a_3) = 10$, 
        $\pi(2, a_1) = 3$, 
        $\pi(2, a_2) = 1$, 
        $\pi(2, a_3) = 16$, 
        $\pi(3, a_1) = \pi(3, a_2) = \pi(3, a_3) = 0$, and 
        $B=16$. Then the constructed graph is the one below, and each solution is captured by the min-weight perfect $b$-matching of the graph below. 
        \begin{center}
        \small
        \begin{tikzpicture}[node distance = 2cm, node/.style={minimum width = 2cm}]
            \node[draw, circle, label={[xshift=-13mm, yshift=-7mm]$b(a_1) = 1$}] (a1) {$a_1$};
            \node[draw, circle, below of = a1, label={[xshift=-13mm, yshift=-7mm]$b(a_2) = 1$}] (a2) {$a_2$};
            \node[draw, circle, below of = a2, label={[xshift=-13mm, yshift=-7mm]$b(a_3) = 1$}] (a3) {$a_3$};
            
            \node[draw, circle, right of = a1, xshift=3cm, label={$b(d_1) = 3$}] (d1) {$d_1$};
            \node[draw, circle, below of = d1, label={[xshift=3mm]$b(d_2) = 2$}] (d2) {$d_2$};
            \node[draw, circle, below of = d2, label={[yshift=-1.5cm]$b(d_3) = 0$}] (d3) {$d_3$};

            \node[draw, circle, right of = d2, label={[xshift=13mm, yshift=-6mm]$b(s) = 2$}] (s) {$s$};

            \draw (a1) -- node[above, xshift=-1.7cm] {5} ++ (d1) ;
            \draw (a1) -- node[above, xshift=-1.4cm, yshift=5mm] {3} ++ (d2) ;
            \draw (a1) -- node[above, xshift=-20mm, yshift=10mm] {0} ++ (d3) ;

            \draw (a2) -- node[above, xshift=-17mm, yshift=-7mm] {8} ++ (d1) ;
            \draw (a2) -- node[above, xshift=-1cm] {1} ++ (d2) ;
            \draw (a2) -- node[above, xshift=-20mm, yshift=2mm] {0} ++ (d3) ;

            \draw (a3) -- node[above, xshift=-20mm, yshift=-15mm] {10} ++ (d1) ;
            \draw (a3) -- node[below, xshift=-10mm, yshift=-4mm] {16} ++ (d2) ;
            \draw (a3) -- node[below, xshift=-1.7cm] {0} ++ (d3) ;

            \draw (s) -- node[above] {0} ++ (d1) ;
            \draw (s) -- node[above] {0} ++ (d2) ;
            \draw (s) -- node[above] {0} ++ (d3) ;
        \end{tikzpicture}
        \end{center}

        Here the optimal solution is to place $a_1$ in $D_2$, $a_2$ in $D_2$, and $a_3$ in $D_1$, which has cost 14 and corresponds to the minimum-weight perfect $b$-matching of the above graph.
    \end{example}

    We now prove the correctness of our algorithm by proving that  there is a successful assignment of additional candidates to districts without exceeding the budget $B$ if and only if $G$ has perfect $b$-matching with weight at most $k$.
    
    $\implies$:
    Suppose there is a successful assignment of additional candidates to districts without exceeding the budget $B$, and let $A_i$ denote the additional candidates assigned to $D_i$, for $i \in [k]$. We construct the following perfect $b$-matching $M$ for $G$. For each $i \in [k]$ and $a \in A_i$, add edge $(a, d_i)$ to $M$. Moreover for each $i\in [k]$, add $\Delta_i - \card{A_i}$ copies of $(d_i, s)$ to $M$. 
    Note that by the properties of $\ell$-recampaigning, it follows that $\card{A_i} \leq \Delta_i$, so the difference in the previous sentence is never negative.
    Moreover, note that those edges in $M$ incident to $s$ contribute nothing to the weight of $M$, so

    \begin{align*}
        \sum_{e \in M} w(e) = \sum_{i=1}^k \sum_{a \in A_i} w(a, d_i) = \sum_{i=1}^k \sum_{a \in A_i} \pi(i, a) \leq B.
    \end{align*}

    Moreover, since every additional candidate is assigned to exactly one district, there is exactly one edge incident to each $a \in A$, i.e., $b(a) = 1$. 
    Let $i \in [k]$.
    There are $\card{A_i}$ edges in $M$ incident to $d_i$ but not $s$. We also have $\Delta_i - \card{A_i}$ edges in $M$ between $d_i$ and $s$. So $d_i$ is incident to $\Delta_i$ edges in $M$, i.e., $b(d_i) = \Delta_i$. Finally, notice that the number of edges incident to $s$ in $M$ is

    \begin{align*}
        \sum_{i=1}^{k} \left(\Delta_i - \card{A_i}\right) = \left(\sum_{i=1}^{k} \Delta_i\right)  - \left(\sum_{i=1}^{k}  \card{A_i}\right)  = \left(\sum_{i=1}^k \Delta_i\right) - \card{A} = b(s).
    \end{align*}

    Therefore $M$ is a perfect $b$-matching with weight at most $k$.

    $\impliedby$:
    Let $M$ be a perfect $b$-matching with weight at most $k$.

    For each $i \in [k]$, let $A_i = \{c \mid (c, d_i) \in M\}$. Since the vertices in $A$ only have edges to vertices in $\{d_1, \ldots, d_k\}$, this corresponds to an assignment to candidates to districts, and because the matching is perfect, each candidate is assigned to exactly one district (and naturally, multiple candidates can be assigned to the same district).
    It remains to show that the assignment defined by $A_1, \ldots, A_n$ is a valid solution to the priced $\ell$-recampaigning problem with budget $B = k$.

    Let $i \in [k]$ and assume $A_i \neq \emptyset$ (otherwise, there's no candidate assigned to district $D_i$).
    Recall that $b(d_i) = \Delta_i$, so $\card{A_i} \leq \Delta_i = \ell - \card{C_i}$. Thus the number of winners in $(C_i \cup A_i, V)$ is at most $\ell$. Thus every candidate electing an additional candidate elects no more than $\ell$ candidates. Finally, the weight of the assignment is 
    \begin{align*}
        \sum_{i=1}^{k} \sum_{a\in A_i} \pi(i, a) = \sum_{i=1}^{k} \sum_{a\in A_i} w(a, d_i) = w(M) \leq B.
    \end{align*}

    The above construction can be done in polynomial-time as $\card{V} = n+ k +1$ and $\card{E} \leq nk + k\ell$,
    and $\P$ is closed downwards under polynomial-time many-one reductions, thus concluding this part of the proof.

    Finally we argue that unbounded recampaigning under $\calr$ is also in $\P$. Notice that in the above construction, $\ell \leq n$, so by modifying the construction to let $\ell =n$, the size of the resulting graph is still polynomial in the size of the input, and the correctness argument still holds. Thus the modified reduction is a correct polynomial-time many-one reduction to Min-Weight Perfect $b$-Matching for Multigraphs, and unbounded recampaigning under $\calr$ is in $\P$.
\end{proof}

To contrast our previous result, we find that under 1-approval, unpriced 2-recampaigning is $\NP$-complete. 
To prove that result, we will first show that Exactly-3-3DM (defined below) is $\NP$-complete and then provide a reduction from that problem. We note that Fitzsimmons and Hemaspaandra~\cite{fit-hem:c:conformant-elections} also use that problem to prove their \NP-completeness results, but they do not provide a proof that Exactly-3-3DM is \NP-complete, so we include one for completeness.

    \LANGDEF{Exactly-3 3-Dimensional Matching (Exactly-3-3DM)}
    {Three pairwise-disjoint sets $W$, $X$, and $Y$, each having $k$ elements, a set $S \subseteq W \times X \times Y$, where each element in $W \cup X \cup Y$ appears in exactly three triples in $S$.}
    {Is there an $\hat{S} \subseteq S$ of size $k$ so that no two elements of $\hat{S}$ agree on any coordinate?}

\begin{proposition} 
    {\rm Exactly-3-3DM} is \NP-complete.
\end{proposition}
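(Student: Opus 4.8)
We begin by noting that Exactly-3-3DM is in \NP: given a candidate $\hat S\subseteq S$, one checks in polynomial time that $\card{\hat S}=k$ and that no two of its triples agree on a coordinate, and any such $\hat S$ is automatically a perfect matching of $W\cup X\cup Y$ (its $k$ distinct $W$-coordinates exhaust $W$, and likewise for $X$ and $Y$). So the whole content is \NP-hardness, and the plan is to reduce from $3$-Dimensional Matching, which is \NP-complete~\cite{kar:b:reducibilities}, in two phases.

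In the first phase I would pass to the restriction of $3$DM in which every element occurs in at most three triples: an element occurring in $r>3$ triples is split into $r$ private copies, one per triple, glued together by the standard ``ring''/selector gadget on fresh elements that forces exactly one copy to be used externally and absorbs the rest internally; an element in zero triples yields a fixed negative output; and after repeatedly contracting \emph{forced} triples (those containing an element that occurs only once) we may further assume every element occurs in exactly two or three triples. In the second phase, given such an instance, I want to raise every element of occurrence two to occurrence exactly three. For each such element $w$ I add one ``hook'' triple $(w,a,c)$ with $a,c$ fresh, and I pool all the fresh endpoints into a single block of fresh elements together with a fixed family of all-fresh ``filler'' triples, designed so that (i) every fresh element, counting its hook together with its filler triples, occurs exactly three times, (ii) the block of fresh elements has a perfect matching using filler triples only, and (iii) the three sides of the resulting instance $I'$ have equal cardinality. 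Property (ii) extends any perfect matching of the phase-one instance to $I'$ by adjoining that filler matching; for the converse I need that \emph{no} perfect matching of $I'$ uses a hook triple, so that restricting such a matching to the non-hook triples recovers a perfect matching of the phase-one instance. The construction is clearly polynomial time.

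The hard part will be proving that no hook triple ever lies in a perfect matching of $I'$. A hook $(w,a,c)$ spends the non-fresh element $w$ on one coordinate only, so using it pre-fills fresh slots on exactly two of the three coordinates, and a naive local count (``using a hook unbalances the three sides of the block'') rules out isolated uses; but a \emph{balanced} collection of hooks---equally many whose non-fresh endpoint sits on each of the three coordinates---pre-fills the block symmetrically and can in general be completed, which would create spurious perfect matchings of $I'$. The fix is to engineer the block so that such balanced activations are impossible, for example by routing hooks through elements forced to occupy a prescribed triple, or by making the filler hypergraph rigid enough that the ``unhooked'' fresh elements cannot be matched independently of the hooked ones; an alternative, if this proves unwieldy, is to bypass padding entirely and build the instance from scratch out of a reduction from $3$SAT, taking care that the truth-setting, clause, and garbage-collection gadgets each leave every element with occurrence exactly three. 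Either way I expect to finish with a single explicit finite gadget together with a short case analysis over which subsets of its hooks a perfect matching could use, and it is getting that case analysis to close---equivalently, pinning down the correct rigid gadget---that I anticipate as the real obstacle.
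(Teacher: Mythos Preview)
Your proposal is not a complete proof---you say so yourself when you flag the ``balanced activations'' problem and write that pinning down the right rigid gadget is the real obstacle you anticipate. That obstacle is genuine for the approach you outline: hooking deficient elements \emph{one at a time} with $(w,a,c)$ and then pooling all the fresh endpoints creates exactly the global coupling you describe, and there is no cheap way to rule out balanced families of hooks being absorbed by a generic filler block. Your proposed fixes (rigid filler hypergraph, or rebuilding from $3$SAT) would each require substantial new gadgetry, and you have not supplied it.

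The paper sidesteps this entirely with a much simpler idea: process deficient elements \emph{three at a time}, one from each side. Starting from R3DM (each element in at most three triples), as long as $\lvert S'\rvert<3\lvert W'\rvert$ there exist $w\in W'$, $x\in X'$, $y\in Y'$ each of occurrence below three; introduce fresh $w^i,x^i,y^i$ and add the four triples
\[
(w,x^i,y^i),\quad(w^i,x,y^i),\quad(w^i,x^i,y),\quad(w^i,x^i,y^i).
\]
Each of $w,x,y$ gains one occurrence, each of $w^i,x^i,y^i$ has exactly three, and $3\lvert W'\rvert-\lvert S'\rvert$ drops by one, so iterating terminates. The crucial property is \emph{local}: in any perfect matching of $I'$, the all-fresh triple $(w^i,x^i,y^i)$ must be used, because the other three triples containing $w^i,x^i,y^i$ pairwise overlap on two fresh coordinates---choosing any one of them to cover, say, $w^i$ simultaneously consumes one of $x^i,y^i$ and blocks all remaining triples containing the third. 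Hence no triple touching an old element is ever used from the gadget, and restricting to the original $S$ recovers a matching there. This two-line case check replaces your entire ``rigid filler'' programme; the trick you are missing is to tie the hooks for three deficient elements from different sides into a single four-triple block with a forced all-fresh witness, rather than handling them independently and trying to glue afterwards.
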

\begin{proof} The NP membership is straightforward, so we will focus on the NP-hardness by reducing from R3DM, which is $\NP$-complete~\cite{gar-joh:b:int}.

    \LANGDEF{Restricted 3-Dimensional Matching (R3DM)~\cite{gar-joh:b:int}}
    {Three pairwise-disjoint sets $W$, $X$, and $Y$, each having $k$ elements, a set $S \subseteq W \times X \times Y$, where each element in $W \cup X \cup Y$ appears in at most three triples in $S$.}
    {Is there an $\hat{S} \subseteq S$ of size $k$ so that no two elements of $\hat{S}$ agree on any coordinate?}

    Let $I=(W,X,Y,S)$ be an instance of R3DM where $k=\card{W}=\card{X}=\card{Y}$. If $\card{S}=3k$, then each element in $W \cup X\cup Y$ appears in exactly 3 triples in $S$, and this is an instance of Exactly-3-3DM, so we make no changes and let $I' = I$. Otherwise, 
    let $W'=W$, $X'=X$, $Y'=Y$, and $S'=S$, and proceed as follows.

    Find $w \in W', x \in X', y \in Y'$ such that $w$, $x$, and $y$ are each in less than three triples in $S'$. (Since each triple in $S'$ contains one element from each of $W'$, $X'$, and $Y'$, there must be some selection of $w$, $x$, and $y$ satisfying this condition.) Let $i = \card{S'}$. Create unique elements $w^{i}, x^{i}, y^{i}$ and add these elements to $W'$, $X'$, and $Y'$, respectively. We then add the following triples to $S'$: $(w, x^i,y^i)$, 
    $(w^i,x,y^i)$, $(w^i,x^i,y)$, and $(w^i,x^i,y^i)$. Each of $w$, $x$, and $y$ now appear in one more triple (in $S'$) than before, and each of $w^i$, $x^i$, and $y^i$ appears in exactly three triples in $S'$.
    The process described in this paragraph is repeated until $\card{S'} = 3\card{X'}$ (it's easy to see that this process terminates in at most $2k$ steps).
    Once the above process terminates, let $I' = (W', X', Y', S')$, which is
    an instance of Exactly-3-3DM. We now show that $I \in {\rm R3DM}$ if and only if $I' \in \text{Exactly-3-3DM}$.

    $\implies$:
    Suppose $I\in {\rm R3DM}$, and let $\hat{S}$ be a 3-dimensional matching. 
    All triples in $S$ remain in $S'$. Therefore $\hat{S} \subseteq S'$. For each newly created elements $w^i$, $x^i$, and $y^i$, $(w^i,x^i,y^i)\in S'$. So, there exists  a set $\hat{S'} \supseteq \hat{S}$ that also contains each $(w^i,x^i,y^i)$
    with all necessary constraints satisfied. Therefore, $I' \in \text{Exactly-3-3DM}$.

    $\impliedby$:
    Suppose $I' \in \text{Exactly-3-3DM}$, and let $\hat{S'}$ be a 3-dimensional matching. For each $w^i$, $x^i$, and $y^i$, $(w^i,y^i,x^i) \in \hat{S'}$. If this triple were not chosen, either $w^i$ or $x^i$ or $y^i$ would not be present in a triple of $\hat{S'}$. The remaining elements to be chosen are exactly those in $W \cup X \cup Y$. These must be selected from the subset of triples in $S'$ which do not contain any newly created element, i.e., $S$. Thus there must exist a 3-dimensional matching $\hat{S} \subseteq S$, so $I \in$ R3DM. 
\end{proof}

\begin{theorem}\label{theorem:1-approval-l2}
    Both $1\dd\approval\dd\$\crc{2}$ and $1\dd\approval\dd\crc{2}$ are \NP-complete.  
\end{theorem}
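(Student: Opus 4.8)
The plan is to dispatch NP-membership cheaply and then spend all the effort on hardness. Both $1\dd\approval\dd\crc{2}$ and $1\dd\approval\dd\$\crc{2}$ lie in $\NP$: as already noted in the proof of Proposition~\ref{prop:priced-unpriced}, every recampaigning problem over a polynomial-time computable voting rule is in $\NP$ (the assignment $A\to[k]$ is a polynomial-size certificate, and since $1$-approval winners are polynomial-time computable one can check in polynomial time that each additional candidate wins in its district, that each active district has at most two winners, and, in the priced case, that the budget is not exceeded). For hardness, by the trivial reduction $1\dd\approval\dd\crc{2}\leq_m^p 1\dd\approval\dd\$\crc{2}$ of Observation~\ref{observation:trivial-relationships} it suffices to show the unpriced problem $1\dd\approval\dd\crc{2}$ is $\NP$-hard; I would prove this by reducing from Exactly-3-3DM, just shown to be $\NP$-complete.

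Given an Exactly-3-3DM instance $(W,X,Y,S)$ with $|W|=|X|=|Y|=k$ (hence $|S|=3k$), I introduce one additional candidate per ground element --- $a_w,\alpha_x,\beta_y$ for $w\in W,x\in X,y\in Y$ --- plus one \emph{shuttle} candidate $\sigma_j$ per triple $s_j=(w_j,x_j,y_j)$, so $|A|=6k$. For each triple $s_j$ I build two constant-size districts $D_j$ and $E_j$, each with a single local ``dummy'' candidate and five carefully ordered votes over the dummy and $A$, engineered so that under $1$-approval: $D_j$ is a valid destination exactly for the pair $\{a_{w_j},\alpha_{x_j}\}$ (both winning, two winners) or for the singleton $\{\sigma_j\}$ (winning alone) --- in particular, if all three of $a_{w_j},\alpha_{x_j},\sigma_j$ are placed in $D_j$ then $\sigma_j$ dominates and $a_{w_j},\alpha_{x_j}$ fail to win --- while $E_j$ is a valid destination exactly for the pair $\{\sigma_j,\beta_{y_j}\}$ (two winners) and is otherwise dead, since neither $\sigma_j$ nor $\beta_{y_j}$ can win there alone. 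The votes also ensure that $a_w$ (resp. $\alpha_x$, $\beta_y$) can win only in the districts $D_j$ with $w_j=w$ (resp. $D_j$ with $x_j=x$, $E_j$ with $y_j=y$), and that $\sigma_j$ can win only in $D_j$ or $E_j$; all of this is verified by a short case analysis over the subsets of $A$ that could be routed to a given district.

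Correctness then runs as follows. From a valid recampaigning, the placements of the $a_w$'s select a set $T$ of $k$ triples that is bijective on the $W$-coordinate; since placing $a_{w_j}$ in $D_j$ forces $\alpha_{x_j}$ into the same district and every $\alpha_x$ must land in such a district, $T$ is bijective on the $X$-coordinate; and since a $D_j$ holding the pair cannot also hold $\sigma_j$, for every $s_j\in T$ the shuttle $\sigma_j$ is forced into $E_j$, dragging $\beta_{y_j}$ with it, which (by distinctness of placements) makes $T$ bijective on the $Y$-coordinate as well --- i.e. $T$ is a perfect $3$-dimensional matching. Conversely, from a perfect matching $\hat S$ one places $a_{w_j},\alpha_{x_j}$ in $D_j$ and $\sigma_j,\beta_{y_j}$ in $E_j$ for $s_j\in\hat S$, and $\sigma_j$ alone in $D_j$ for $s_j\notin\hat S$; every additional candidate then wins and every active district has exactly one or two winners. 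The reduction is plainly polynomial time, so $1\dd\approval\dd\crc{2}$ is $\NP$-hard; with Observation~\ref{observation:trivial-relationships} and the $\NP$ upper bound this yields $\NP$-completeness of both $1\dd\approval\dd\crc{2}$ and $1\dd\approval\dd\$\crc{2}$.

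The main obstacle is the clash between the three-element structure of a triple and the budget of only two winners per active district: no single district can host all three element-candidates of a selected triple, so each triple must be split across two districts, and the shuttle candidate $\sigma_j$ is precisely the device that couples the two halves --- $\sigma_j$ has a legal home if and only if exactly one of ``$D_j$ holds the $\{a_{w_j},\alpha_{x_j}\}$ pair'' or ``$E_j$ holds the $\{\sigma_j,\beta_{y_j}\}$ pair'' occurs. Getting the $1$-approval vote profiles to admit precisely these intended subsets --- and no spurious ones, even when foreign additional candidates are thrown into the district --- is the delicate part; the ``exactly three triples per element'' hypothesis keeps the instance uniform but plays no essential role beyond that.
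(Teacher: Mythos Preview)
Your proposal is sound and can be completed into a full proof, but it takes a genuinely different route from the paper.

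The paper also reduces from Exactly-3-3DM, but far more economically: it sets $A=W\cup X$ (no $Y$-candidates, no shuttles) and creates exactly one empty-candidate district $D_y$ per $y\in Y$. A pigeonhole count ($|A|=2k$ candidates, $k$ districts, at most two per district) forces exactly two candidates into every district, so the heart of the proof is a single lemma: for each $y$, the pair $\{a,b\}$ is the $1$-approval winner set of $(\{a,b\},V_y)$ if and only if $(a,b,y)\in S$. The votes in $V_y$ are built by a four-way case split on the multiplicity pattern among the three triples containing $y$ (this is where the ``exactly three'' hypothesis is genuinely used), with the largest case needing twelve votes.

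Your construction trades compactness for uniformity: two fixed five-vote gadgets per triple, no case analysis on $S_y$, and---as you note---no real dependence on the exactly-three hypothesis. The price is the shuttle mechanism and the doubled district count. One remark: your parenthetical that ``$\sigma_j$ dominates and $a_{w_j},\alpha_{x_j}$ fail to win'' when all three are placed in $D_j$ is not the only way to realise the gadget, and depending on how you write the five votes you may instead get $\sigma_j$ losing while $a_{w_j},\alpha_{x_j}$ win; either outcome makes the triple assignment invalid, which is all your correctness argument actually uses. Both versions are achievable with one dummy and five votes (e.g.\ take the orders $\sigma>a>d>\alpha$, $\sigma>a>d>\alpha$, $\sigma>\alpha>d>a$, $\alpha>d>\sigma>a$, $d>a>\sigma>\alpha$ over $\{d,a,\alpha,\sigma\}$ with all foreign candidates below $d$), so the ``delicate part'' you flag does go through. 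The paper's version has the further advantage that it lifts almost verbatim to the dichotomy for all nontrivial pure scoring rules (Theorem~\ref{theorem:nontrivial-2}); your gadget would need a separate padding argument there.
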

\begin{proof}

Clearly, $1\dd\approval\dd\crc{2} \leq_m^p 1\dd\approval\dd\$\crc{2}$ and both problems are in $\NP$, so in this proof we only give the NP-hardness of $1\dd\approval\dd\crc{2}$.

Consider an instance $(W, X, Y, S)$ of Exactly-3-3DM and construct an instance of $1\dd\approval\dash\crc{2}$ as follows. Let $A=W\cup X$ and for each $y \in Y$, we will define a district $D_y = (C_y, V_y)$, where $C_y = \emptyset$ and $V_y$ will be defined later in this proof.
For each $y \in Y$, 
let $S_y=\{(w_1,x_1,y),(w_2,x_2,y), (w_3,x_3,y)\}$ contain the three elements of $S'$ containing $y$.
Finally, let $A' = A - \{w_1, w_2, w_3, x_1, x_2, x_3\}$.
We now define for each $y \in Y$ what $V_y$ is, based on the structure of $S_y$. For each case below, the bulleted list is precisely the list of votes in $V_y$.\footnote{Note that we could simplify this construction here to prove our result. However, this slightly more complex exposition will be useful when proving Theorem~\ref{theorem:nontrivial-2}.}

\noindent\textbf{Case~1: \boldmath$w_1 = w_2 = w_3$ and \boldmath$\card{\{x_1, x_2, x_3\}} = 3$}.
\begin{itemize}
        \item $w_1 > x_1>x_2 > x_3 > \overset{\rightarrow}{A'}$,
        \item $x_1>x_2>x_3>w_1 > \overset{\rightarrow}{A'}$
\end{itemize}

\noindent\textbf{Case~2: \boldmath$w_1 = w_2 \neq w_3$ and $x_1 = x_3 \neq x_2$.}
\begin{itemize}
        \item $w_3>x_2>w_1>x_1 > \overset{\rightarrow}{A'}$
        \item $w_3>w_1>x_2>x_1 > \overset{\rightarrow}{A'}$
        \item $x_1>w_1>w_3>x_2 > \overset{\rightarrow}{A'}$
        \item $x_2>x_1>w_3>w_1 > \overset{\rightarrow}{A'}$
\end{itemize}

\noindent\textbf{Case~3: \boldmath$w_1 = w_2 \neq w_3$ and $\card{\{x_1, x_2, x_3\}} = 3$.}
\begin{itemize}
        \item $w_3>w_1>x_3>x_1>x_2>\overset{\rightarrow}{A'}$
        \item $x_3>w_1>w_3>x_2>x_1>\overset{\rightarrow}{A'}$
        \item $w_3>x_1>x_2>x_3>w_1>\overset{\rightarrow}{A'}$
        \item $x_3>x_1>x_2>w_3>w_1>\overset{\rightarrow}{A'}$
\end{itemize}

\noindent\textbf{Case~4: \boldmath$\card{\{w_1, w_2, w_3\}} = 3$ and $\card{\{x_1, x_2, x_3\}} = 3$}. 
\begin{itemize}
        \item $w_1 > w_2 > w_3 > x_3 > x_2 > x_1 > \overset{\rightarrow}{A'}$
        \item $w_1 > x_2 > x_3 > w_3 > w_2 > x_1 > \overset{\rightarrow}{A'}$
        \item $x_1 > w_2 > w_3 > x_3 > x_2 > w_1 > \overset{\rightarrow}{A'}$
        \item $x_1 > x_2 > x_3 > w_3 > w_2 > w_1 > \overset{\rightarrow}{A'}$
        \item $w_2 > w_1 > w_3 > x_3 > x_2 > x_1 > \overset{\rightarrow}{A'}$
        \item $w_2 > x_1 > x_3 > w_3 > x_2 > w_1 > \overset{\rightarrow}{A'}$
        \item $x_2 > w_1 > w_3 > x_3 > w_2 > x_1 > \overset{\rightarrow}{A'}$
        \item $x_2 > x_1 > x_3 > w_3 > w_2 > w_1 > \overset{\rightarrow}{A'}$
        \item $w_3 > w_1 > w_2 > x_3 > x_2 > x_1 > \overset{\rightarrow}{A'}$
        \item $w_3 > x_1 > x_2 > x_3 > w_2 > w_1 > \overset{\rightarrow}{A'}$
        \item $x_3 > w_1 > w_2 > w_3 > x_2 > x_1 > \overset{\rightarrow}{A'}$
        \item $x_3 > x_1 > x_2 > w_3 > w_2 > w_1 > \overset{\rightarrow}{A'}$
\end{itemize}

The remaining cases can be analogously obtained from the above cases (e.g., if $x_1 = x_2 = x_3$, then the resulting construction is analogous to Case~1).

\begin{observation}\label{observation:add-two-cands}
Since $\card{W}=\card{X}=\card{Y}$, we have $\card{Y}$ districts, and no more than two candidates can be added to a district, we must add exactly two candidates from $A$ to each district. 
\end{observation}

For each $y \in Y$, let $n_y = \card{V_y}$ (which is even). We prove the following lemma first.

\begin{lemma}\label{lemma:valid-cands}
For each $y \in Y$ and 
for each  $a,b \in A$, 
the winner set of $(\{a,b\}, V_y)$ under 1-approval is $\{a,b\}$ if and only if $(a,b,y) \in S$.
\end{lemma}
\begin{proof} 
Let $y \in Y$ and let $a,b \in A$.
First notice that $(a,b,y) \in S$ if and only if $(a,b,y) \in S_y$.

$\impliedby$:
Assume $(a, b, y) \in S_y$.
In 1-approval election $(\{a,b\},V_y)$, both $a$ and $b$ receive a score of $n_y/2$ and so both are elected as winners.

$\implies:$
Assume $(a,b,y) \not\in S_y$. We consider the following cases.

\noindent\textbf{Case~A: \boldmath$a$ and $b$ are elements of two distinct triples in $S_y$.}

This represents the case where $a$ or $b$ could each be a cowinner in district $y$, but not when paired together. In such a case, in 1-approval election $(\{a,b\},V_y)$, $a$ receives a score of $n_y/2-1$ and $b$ receives a score of $n_y/2+1$ or vice versa. So either $a$ or $b$ is a winner, but not both. 

\noindent\textbf{Case~B: \boldmath$a$ is an element of a triple in $S_y$, but $b$ is not, or vice versa.}

Suppose $b$ is not present in any of the triples of $S_y$, but $a$ is. Then $a$ will always be ranked higher than $b$, and so $a$ receives $n_y$ points and $b$ receives 0. Thus $a$ is elected as the sole winner.

\noindent\textbf{Case~C: \boldmath Neither $a$ nor $b$ are elements of a triple in $S_y$. }

Suppose $a$ appears before $b$ in lexicographic ordering. Then, $a$ is ranked above $b$ in each vote in $V_y$. Therefore, $a$ receives $n_y$ points and $b$ receives 0. $a$ is the only winner under this election.
\end{proof}
$\implies$:
Suppose $I=(W,X,Y,S)$ is a Yes-instance of Exactly-3-3DM with certificate $\hat{S}\subseteq S$. For each $y\in Y$, there exists $(w,x,y) \in \hat{S}$, where $w,x \in A$, so we can add both $w$ and $x$ to district $D_y$. By Lemma~\ref{lemma:valid-cands}, $w$ and $x$ are the only winners of district $D_y$. After repeating this for all $y\in Y$, all candidates in $A$ have been added to a district where they win.
So, $(A, D_1, \ldots, D_{\card Y})$ is a Yes-instance of $1\dd\approval\dd\crc{2}$.

$\impliedby$:
Suppose $(A, D_1, \ldots, D_{\card Y})$ is a Yes-instance of $1\dd\approval\dd\crc{2}$. 
By Observation~\ref{observation:add-two-cands}, we can only add exactly two candidates from $A$ to each district. Each candidate can only be added to one district, and all candidates in $A$ have been added. For each $D_y$, let $w_y$ and $x_y$ be the candidates added from $A$ which are winners of the resulting election in that district. From Lemma~\ref{lemma:valid-cands}, $(w_y,x_y,y) \in S$. 
Let $\hat{S} = \{(w_y, x_y, y) \mid y \in Y\}$. It follows that $\hat{S}$ is 3-dimensional matching and so it is a Yes-instance of Exactly-3-3DM\@.
\end{proof}

By modifying the above proof, we are able to \emph{fully determine} the complexity of 2-recampaigning for nontrivial scoring rules.

\begin{theorem}\label{theorem:nontrivial-2}
    For each nontrivial pure scoring rule $\calr$,
    both $\calr\dd\$\crc{2}$ and $\calr\dd\crc{2}$ are
    $\NP$-complete.
\end{theorem}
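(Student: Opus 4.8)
The plan is to obtain $\NP$ membership from general principles and $\NP$-hardness by promoting the reduction behind Theorem~\ref{theorem:1-approval-l2} from $1$-approval to an arbitrary nontrivial pure scoring rule $\calr$. Membership is immediate: every pure scoring rule is polynomial-time computable (by definition its scoring vectors are, and winner determination then reduces to summing and comparing scores), so by the fact recalled in the proof of Proposition~\ref{prop:priced-unpriced} all of our recampaigning problems with this underlying rule lie in $\NP$; in particular $\calr\dd\$\crc{2}$ and $\calr\dd\crc{2}$ are in $\NP$. Since $\calr\dd\crc{2} \leq_m^p \calr\dd\$\crc{2}$ by Observation~\ref{observation:trivial-relationships}, it suffices to prove $\NP$-hardness of $\calr\dd\crc{2}$.

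For hardness I would reduce from Exactly-3-3DM, reusing verbatim the districts $D_y = (C_y, V_y)$ with $C_y = \emptyset$ and the set $A = W \cup X$ built in the proof of Theorem~\ref{theorem:1-approval-l2}. Everything in that argument is purely combinatorial except for Lemma~\ref{lemma:valid-cands}, so the only genuine task is to re-prove that lemma with $\calr$ in place of $1$-approval. This is exactly where nontriviality enters: since $\calr$ is nontrivial, the scoring vector for two candidates, say $(\alpha_1, \alpha_2)$, contains two distinct values and is nonincreasing, so $\alpha_1 > \alpha_2$. Consequently, for any two-candidate $\calr$-election $(\{a, b\}, V)$ the score of $a$ exceeds that of $b$ by $(\alpha_1 - \alpha_2)$ times the difference between the number of votes ranking $a$ first and the number ranking $b$ first; since $\alpha_1 - \alpha_2 > 0$, the $\calr$-winner set equals $\{a, b\}$ iff those two counts are equal --- precisely the condition that governs $1$-approval on two candidates, and independent of the particular values $\alpha_1 > \alpha_2$. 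Hence the four-case analysis in the proof of Lemma~\ref{lemma:valid-cands}, which only ever compares how often two candidates are ranked first, carries over word for word, yielding: for each $y \in Y$ and $a, b \in A$, the $\calr$-winner set of $(\{a, b\}, V_y)$ is $\{a, b\}$ iff the triple with entries $a, b, y$ lies in $S$. (This reuse is what the footnote to that proof anticipates.)

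The remaining steps are then identical to the $1$-approval case. I would first re-establish Observation~\ref{observation:add-two-cands}: $\card{A} = \card{W} + \card{X} = 2\card{Y}$ and there are $\card{Y}$ districts, and a district receiving three or more members of $A$ would be forced to elect all of them (each $a \in A$ must win in its single assigned district), giving that district at least three winners and violating the bound $\ell = 2$; so each district receives at most two, hence by counting exactly two, candidates from $A$, and in particular every district election occurring in an accepting assignment is genuinely a two-candidate election since $C_y = \emptyset$. The forward direction (given an exact matching $\hat S$, assign to $D_y$ the $W$- and $X$-entries of the unique triple of $\hat S$ using $y$; these two are exactly the winners of $D_y$ by the generalized lemma, and no candidate is used twice) and the backward direction (the two winning $A$-candidates $w_y, x_y$ of $D_y$ form, together with $y$, a triple of $S$ by the generalized lemma, and since assignments across districts are disjoint no two of the triples $(w_y, x_y, y)$ agree on a coordinate, so they constitute an exact matching) then go through unchanged, completing the reduction.

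The only place I expect any friction is the bookkeeping inside the generalized Lemma~\ref{lemma:valid-cands}: one must confirm that, for every overlap pattern of the three triples at a given $y \in Y$ (Cases~1--4 and their symmetric variants), two candidates are ranked first equally often exactly when they and $y$ form a triple of $S$. But because ``tie versus strict win'' on two candidates under any nontrivial pure scoring rule is decided solely by first-place counts, this is literally the verification already performed for Theorem~\ref{theorem:1-approval-l2} --- the added content is merely the observation that that computation never depended on the scoring values.
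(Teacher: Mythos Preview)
Your proposal is correct and takes a genuinely different, more elementary route than the paper. The paper generalizes Theorem~\ref{theorem:1-approval-l2} by padding each district with $m-2$ dummy candidates $s_1,\ldots,s_{i-1},s_{i+2},\ldots,s_m$ (where $\alpha_1=\cdots=\alpha_i>\alpha_{i+1}$ in some nontrivial $m$-candidate scoring vector) and replicating the vote block $i$ times so that the two additional candidates always land in positions $i$ and $i+1$; it then re-verifies the analogue of Lemma~\ref{lemma:valid-cands} by a fresh score computation involving both the additional candidates and the dummies. You instead exploit the full strength of the paper's nontriviality definition (``for any $m>1$''): since the two-candidate vector already satisfies $\alpha_1>\alpha_2$, and since Observation~\ref{observation:add-two-cands} guarantees that every district in an accepting assignment is a two-candidate election, the winner set under $\calr$ is determined by first-place counts exactly as under $1$-approval, so Lemma~\ref{lemma:valid-cands} and the remainder of the proof of Theorem~\ref{theorem:1-approval-l2} transfer verbatim with no added construction. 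Your argument is shorter and avoids the dummy candidates and vote replication entirely; the paper's approach, by contrast, would survive under a weaker notion of nontriviality in which only \emph{some} $s_m$ is required to have two distinct values (indeed the paper's Observation~\ref{proposition:nontriv-m} hints at this), so it is in that sense more robust to the precise definition.
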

\begin{proof}
The winner problem for any pure scoring rule is in \P, so both problems are in NP. In light of Observation~\ref{observation:trivial-relationships}, we will only show the NP-hardness of $\calr\dd\crc{2}$, which we do by modifying the proof of Theorem~\ref{theorem:1-approval-l2}.

\begin{observation}\label{proposition:nontriv-m}
    For any nontrivial pure scoring rule $\calr$, we can find in constant time some $m'$ such that for all $m \ge m'$, the scoring vector $s_m$ is nontrivial. 
\end{observation}
We first find a nontrivial scoring vector $s_m$ for $\calr$, where $m > 1$. Because the scoring vector consists of nonincreasing values, there is an $i \in [m-1]$, such that $\alpha_1=\cdots=\alpha_i>\alpha_{i+1}$. 
Given an instance $(W, X, Y, S)$ of Exactly-3-3DM, we first perform the reduction used in the proof of Theorem~\ref{theorem:1-approval-l2}, and modify the construction further below.

For each district $D_y$, $y \in Y$, let $C_y=\{s_1,\ldots,s_{i-1},s_{i+2},\ldots,s_m\}$. From Observation~\ref{observation:add-two-cands}, this ensures that the election in each district contains $m$ candidates. 

Let $V_y' = V_y$. We will redefine $V_y$ later to effectively contain $i$ copies of $V_y'$. 
For each $k \in [i-1]_0$, define the collection of votes $V_{y, k}$ as follows
\begin{itemize}
    \item if $ k = 0$, then for each $v \in V_y'$, $V_{y, 0}$ contains vote $s_1 > \cdots > s_{i-1} > v > s_{i+2} > \cdots > s_m$,\footnote{The ``$v$'' in the vote should be interpreted to mean that we list the candidates listed in $v$ in the same order they appear in $v$.} and
    \item if $k > 0$, then for each $v \in V_{y}'$, $V_{y, k}$ contains vote $v > s_1 > \cdots > s_{k-1} > s_{k+1} > \cdots > s_m > s_k$.
\end{itemize}

Finally, let $V_y = \bigcup_{k \in [i-1]_0} V_{y, k}$, and prove the analogous result to Lemma~\ref{lemma:valid-cands}.

\begin{lemma}
    For each $y \in Y$ and 
for each $a,b \in A$, 
the winner set of $\calr(C_y \cup \{a,b\}, V_y) = \{a,b\}$  if and only if $(a,b,y) \in S$.
\end{lemma}
\begin{proof}
    Let $y \in Y$ and let $a,b \in A$.
    Let $p=\alpha_1=\cdots=\alpha_i$ be the score given to the first $i$ ranked candidates. Let $q=\alpha_{i+1}\ge\cdots\ge\alpha_m$ be the score given to the next candidate. Let $n_y=\card{V_{y,0}}$. We will consider the same cases as in Lemma~$\ref{lemma:valid-cands}$. In this proof, let $\score(x)$ denote the score of candidate $x$ using votes from $V_y$.

    $\impliedby:$    
    Let $(a, b, y) \in S$.
    In $V_{y,0}$, the first $i-1$ scoring values are given to  candidates in $C_y$. The next two scoring values are $p,q$ where $p>q$. Since $a$ and $b$ are ranked in the $i$th place in half of the votes and in the $(i+1)$th place in the other half, their scores in $V_{y,0}$ are each $p(n_y/2)+q(n_y/2)$. 
    
    For each of the remaining $V_{y,1},\ldots,V_{y,i-1}$, $a,b$ are the top two ranked candidates. If $i=1$, the first two scoring values are not equal and no copies of the votes were made. Thus $\calr(C_y \cup \{a,b\}, V_y) = \{a,b\}$. Suppose now that $i > 1$.
    So for each $k \in [i-1]$, $a$ and $b$ receive $p(n_y)$ points from $V_{y, k}$.
    Therefore the scores of $a$ and $b$ in $V_y$ are $$p(n_y/2)+q(n_y/2)+p(n_y)(i-1).$$ To verify they are winners, we need to verify that no other candidates scored more than this.
    We first will look at candidates $s_1,\ldots,s_{i-1}$. In $V_{y,0}$, each of these candidates receives $p(n_y)$ points from being ranked in the top $i$ positions for each vote. For each remaining $V_{y,1},\ldots,V_{y,i-1}$, it is not immediately clear the scores they will receive. However, by construction we know that in $V_{y,k}$, $s_k$ is ranked last and receives $\alpha_m$, and $\alpha_m \le q$. For the remaining copies, their score is at most $p$. So for each $k \in [i-1]$, when combining the scores of all votes in district $D_y$, $$\score(s_k)\le p(n_y)+p(n_y)(i-2)+q(n_y)=p(n_y)(i-1)+q(n_y).$$ And, since $p>q$, $$\score(s_k) \leq p(n_y)(i-1)+q(n_y) < p(n_y)(i-1)+p(n_y/2)+q(n_y/2) = \score(a) = \score(b).$$

    Last, we consider candidates $s_{i+2}, \ldots,s_m$. These candidates score $\alpha_{i+2},\ldots,\alpha_m$ in $V_{y,0}$ and $\alpha_{i+1},\ldots,\alpha_{m-1}$ in the remaining votes. Therefore they always receive at most $q$ points for each vote, and their score is at most $q (n_y)  i$ which is less than the scores of $a$ and $b$. Therefore, 
    $\calr(C_y \cup \{a,b\}, V_y) = \{a,b\}$.

    $\implies:$ Let $(a,b,y) \not\in S$.
    For this case, we will once again show that $a$ and $b$ will have different scores and cannot both be winners. Note that in each of $V_{y,1},\ldots,V_{y,i-1}$, $a$ and $b$ receive the same score. It suffices to show they receive different scores in $V_{y,0}$.
    
    \noindent\textbf{Case~A: \boldmath$a$ and $b$ are elements of two distinct triples in $S_y$.}
    
    In $V_{y,0}$, the first $i-1$ rankings are $s_1, \ldots, s_{i-1}$ candidates. So, the next ranked candidate receives $p$ points for that vote, and the following receives $q$. In $n_y/2-1$ votes, $a$ is ranked ahead of $b$, and in $n_y/2+1$ votes, $b$ is ranked ahead of $a$ (or vice versa). Thus $a$ receives a partial score in $V_{y,0}$ of $p(n_y/2-1)+q(n_y/2+1)$ points, and $b$ receives $p(n_y/2+1)+q(n_y/2-1)$ points. Since $p>q$, $b$ has more points than $a$, so both cannot be winners.
    
    \noindent\textbf{Case~B: \boldmath$a$ is an element of a triple in $S_y$, but $b$ is not, or vice versa.}
    
    In $V_{y,0}$, $a$ and $b$ must be in the $i,i+1$th positions and receive $p$ or $q$ points respectively for each vote. Since $b\in A'$, $a$ will receive $p(n_y)$ points and $b$ will receive $q(n_y)$ points in $V_{y,0}$. Therefore, $a$ and $b$ are not both winners.
    
    \noindent\textbf{Case~C: \boldmath Neither $a$ nor $b$ are elements of a triple in $S_y$. }
    
    Suppose $a$ comes before $b$ in the lexicographic order. Then $a$ will receive $p$ points for each vote in $V_{y,0}$, and $b$ will receive $q$ for each vote in $V_{y,0}$. In other words, $a$ will receive $p(n_y)$ points and $b$ will receive $q(n_y)$ points in $V_{y,0}$, so both cannot be winners.
\end{proof}
We have proven the analogous result to Lemma~\ref{lemma:valid-cands}, but for any nontrivial scoring vector rather than just 1-approval. The rest of the proof follows from the end of the proof for 1-approval in Theorem~\ref{theorem:1-approval-l2}. Thus, $\calr\dd\$\crc{2}$ and $\calr\dd\crc{2}$ are \NP-complete for any nontrivial, pure scoring rule $\calr$.
\end{proof}

Extending our proof of Theorem~\ref{theorem:nontrivial-2} to hold for all $\ell \geq 3$ does not seem straightforward, as in our proofs some assignments of candidates are not successful because they elect more than 2 candidates in a district, which for larger values of $\ell$ may be completely acceptable.
Nonetheless for the remaining types of recampaigning not yet covered in this section, we show that for any voting rule in the $t$-approval and $t$-veto families of (nontrivial) scoring rules, bounded (with $\ell > 2$) and unbounded recampaigning is $\NP$-complete.

\begin{theorem}\label{theorem:3-and-above}
    For each 
    voting rule $\cale$ within the $t$-approval or $t$-veto families of voting rules 
    and each $\ell \geq 3$,
      $\cale\dd\crc{\ell}$, $\cale\dd\$\crc{\ell}$, $\cale\dd\crc{}$, and $\cale\dd\$\crc{}$ are all $\NP$-complete.
\end{theorem}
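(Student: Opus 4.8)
The plan is to reduce from Exactly-3-3DM, reusing and extending the district construction from the proof of Theorem~\ref{theorem:1-approval-l2}, and to arrange things so that one and the same family of instances witnesses $\NP$-hardness of $\cale\dd\crc{\ell}$ for \emph{every} $\ell \ge 3$ and of $\cale\dd\crc{}$ simultaneously. Membership in $\NP$ is immediate, since the winner problem for every $t$-approval and $t$-veto rule is in $\P$ and a candidate-to-district assignment together with the induced winner sets can be verified in polynomial time. By Observation~\ref{observation:trivial-relationships} it then suffices to establish hardness of the two \emph{unpriced} problems $\cale\dd\crc{\ell}$ and $\cale\dd\crc{}$; the priced versions inherit it.

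First I would take an instance $(W,X,Y,S)$ of Exactly-3-3DM, set $A = W\cup X$, and build one district $D_y$ per $y\in Y$, so there are $\card{Y}$ districts and $\card{A} = 2\card{Y}$ additional candidates. Each $D_y$ will carry a fixed block of ``dummy'' candidates in $C_y$ whose number depends only on $t$, together with votes manufactured---as in the four structural cases of Theorem~\ref{theorem:1-approval-l2}---from the three triples of $S$ containing $y$. The dummies do double duty: they force every district that actually receives an additional candidate to have more than $t$ candidates (so that neither $t$-approval nor $t$-veto behaves degenerately there, which is precisely the regime Theorems~\ref{theorem:1-approval-l2} and~\ref{theorem:nontrivial-2} do not cover), and they give enough slack in the vote design to make the following analogue of Lemma~\ref{lemma:valid-cands} go through: for $a,b\in A$, adding exactly $\{a,b\}$ to $D_y$ yields winner set $\{a,b\}$, of size exactly $2$, if and only if $(a,b,y)\in S$, whereas adding any other set---a singleton, a ``wrong'' pair, or anything containing a third element of $A$---leaves some added candidate a non-winner.

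Given that lemma, I would finish with a counting argument in the spirit of Observation~\ref{observation:add-two-cands} that is deliberately agnostic to the winner bound: because adding a third candidate to a district always creates a non-winner, any valid assignment places at most two candidates per district, and since $\card{A} = 2\card{Y}$ there are then exactly two per district, each forming (with its $y$) a triple of $S$; conversely an Exactly-3-3DM solution $\hat S$ induces the assignment placing $w,x$ into $D_y$ for each $(w,x,y)\in\hat S$. Since every district receiving candidates has exactly two winners, this assignment respects the bound for any $\ell\ge 3$ and the argument reads verbatim with no bound at all, so a single reduction handles $\cale\dd\crc{\ell}$ for all $\ell\ge 3$ and $\cale\dd\crc{}$; Observation~\ref{observation:trivial-relationships} then carries hardness to $\cale\dd\$\crc{\ell}$ and $\cale\dd\$\crc{}$.

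The hard part will be the $t$-veto case and, more broadly, getting the vote gadget right uniformly in $t$. Under $t$-veto, inserting a candidate \emph{helps} those near the bottom by displacing them out of the last-$t$ block, so the gadget must be built ``dually'' to the approval case; I would need to check, for every $t$ and in each of the four structural cases, both that no dummy can ever be promoted into the winner set and that a spurious third addition still yields a non-winner (otherwise the counting argument collapses). Re-deriving the analogue of Lemma~\ref{lemma:valid-cands} subject to these constraints, across all four cases and both rule families, is where essentially all of the technical effort sits; the remainder of the reduction is bookkeeping.
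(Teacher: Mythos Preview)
Your plan is correct in spirit but takes a genuinely different route from the paper. You propose to extend the Exactly-3-3DM reduction of Theorem~\ref{theorem:1-approval-l2} (one district per $y\in Y$, two additional candidates per district, four structural cases) by padding with $t$-dependent dummies and then additionally proving that placing three or more additional candidates into a district always leaves one of them a non-winner. The paper instead reduces directly from X3C: it creates one district per $3$-set $S_i$, sets $A=\calu$, and designs a single uniform gadget (seven votes for $t$-approval, five for $t$-veto, with $t-1$ dummy blocks and one sentinel $s$) so that a nonempty $\hat A$ becomes exactly the winner set of $D_i$ if and only if $\hat A=S_i$. Thus every ``active'' district receives exactly three candidates and has exactly three winners, which immediately handles every $\ell\ge 3$ and the unbounded case; no case analysis on the triple structure is needed, and the ``spurious third addition'' issue never arises because the lemma already ranges over all nonempty subsets.

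What your approach buys is uniformity with the $\ell=2$ case: since your active districts have exactly two winners, the very same reduction would witness hardness for all $\ell\ge 2$ at once, absorbing Theorems~\ref{theorem:1-approval-l2} and~\ref{theorem:nontrivial-2} into Theorem~\ref{theorem:3-and-above}. The price is exactly what you flag: you must redo the four-case vote design for both rule families while also ruling out all larger assignments, and for $t$-veto (where inserting a candidate shifts others \emph{out} of the vetoed block) this is delicate. The paper's route sidesteps that work entirely at the cost of using a separate, simpler construction for $\ell\ge 3$ than for $\ell=2$.
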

\begin{proof}
    Since all the above problems are in $\NP$, it suffices to prove $\NP$-hardness for the unpriced problems. Let $\ell = 3$ and let $t \geq 1$. Consider an X3C instance with $\calu=\{u_1,\ldots,u_{3m}\}$ and $\cals = \{S_1,\ldots,S_k\}$. Without loss of generality, assume $m > 1$ and $k > 1$.

    \noindent{\bfseries Proof for \boldmath$t$-approval.}
    In our construction, we define $k$ districts, and for each district we will give seven sets of buffer candidates as part of our construction.
    Let $A = \calu$. 
    For each $i \in [k]$ and for each $j \in [7]$, let
    $B^i_j = \{b^i_{j, 1}, \ldots, b^i_{j, t-1}\}$, which will denote each of the seven sets of buffer candidates for district $i$. We require that in each district, the sets of buffer candidates be pairwise disjoint, and that none of the buffer candidates be in $A$\@.
    For brevity, we write $B^i = B^i_1 \cup \cdots \cup B^i_7$.

    Let $A'_i = A - S_i$, where $S_i = \{s_{i, 1}, s_{i, 2}, s_{i, 3}\}$, and 
    for each $i \in [k]$, define district $D_i = (C_i, V_i)$, where
    $C_i = \{p_i\} \cup B^i$ such that $p_i$ is a new candidate not in $A \cup B^i$,
    and $V_i$ contains the following seven votes (over $C_i \cup A$):
    \begin{itemize}
        \item $B^i_1 > s_{i,1} > p_i > s_{i,2} > s_{i,3} > (B^i - B^i_1) \cup A'_i$
        \item $B^i_2 > s_{i,1} > p_i > s_{i,2} > s_{i,3} > (B^i - B^i_2) \cup A'_i$
        \item $B^i_3 > s_{i,2} > p_i > s_{i,3} > s_{i,1} > (B^i - B^i_3) \cup A'_i$
        \item $B^i_4 > s_{i,2} > p_i > s_{i,3} > s_{i,1} > (B^i - B^i_4) \cup A'_i$
        \item $B^i_5 > s_{i,3} > p_i > s_{i,1} > s_{i,2} > (B^i - B^i_5) \cup A'_i$
        \item $B^i_6 > s_{i,3} > p_i > s_{i,1} > s_{i,2} > (B^i - B^i_6) \cup A'_i$
        \item $B^i_7 > p_i > s_{i,1} > s_{i,2} > s_{i,3} > (B^i - B^i_7) \cup A'_i$
    \end{itemize}

    Notice that initially, in election $(C_i, V_i)$, each buffer candidate has 1 point and $p_i$ has 7 points, therefore making $p_i$ a unique winner of that election.
    
    \begin{lemma}\label{lemma:k-approval-winners}
        For each $i \in [k]$ and
        for each nonempty $\hat{A} \subseteq A$, the winner set of $(C_i \cup \hat{A}, V_i)$ under $t$-approval is $\hat{A}$ if and only if $\hat{A} = S_i$.
    \end{lemma}
    \begin{proof}
        Let $i \in [k]$ and let $\hat{A} \subseteq A$ be nonempty.
        
        Suppose the winner set of $(C_i \cup \hat{A}, V_i)$ under $t$-approval is $\hat{A}$.
        All candidates in $C_i - \{p_i\}$ have score 1, and $p_i$ has at least 1 point.
        Suppose $\hat{A} \cap A_i' \neq \emptyset$, then any such candidate will have 0 points, and thus lose, so $\hat{A} \subseteq S_i$.
        Moreover, $p_i$ is not a winner as $p_i$ cannot be in $\hat{A}$.
        Since each candidate in $S_i$ has at most 2 points, it must be that $p_i$ has at most 1 point. This is only possible when $\hat{A} = S_i$.
        Conversely, let $\hat A = S_i$.
        Then, $s_{i,1}$, $s_{i,2}$, and $s_{i,3}$ all have a score of 2, and all remaining candidates receive 1 point, so the winner set of $(C_i \cup \hat{A}, V_i)$ under $t$-approval is $\hat{A}$.
    \end{proof}

    We now give the correctness of our reduction by proving that $(\calu, \cals)$ is a Yes-instance of X3C if and only if $(A, D_1, \ldots, D_k)$ is a Yes-instance of $t\dd {\rm approval}\dd\crc{\ell}$.
    
    $\implies$:
    Suppose $(\calu, \cals)$ is a Yes-instance of X3C with certificate $\cals'$. Each element of $\cals'$ corresponds to a district where we can add the three candidates (elements in the triple), all of whom win by Lemma~\ref{lemma:k-approval-winners}. Since all elements in $\calu$ are in exactly one triple of $\cals'$, all elements of $A$ are added to and win in exactly one district. Therefore, this corresponds to a Yes-instance of $t\dd {\rm approval}\dd\crc{\ell}$.

    $\impliedby$:
    Suppose $(A, D_1, \ldots, D_k)$ is a Yes-instance of $t\dd {\rm approval}\dd\crc{\ell}$. Then for each district, we have either added 3 candidates or 0 candidates. The 3 added candidates correspond to a set containing them in $\cals$ by Lemma~\ref{lemma:k-approval-winners}, so the set of districts with added candidates corresponds to some $\cals'$ which is a certificate of X3C. All candidates in $A$ have been added to exactly one district, so all elements in $\calu$ are in exactly one triple of $\cals'$.

    By Lemma~\ref{lemma:k-approval-winners}, either zero or three candidates must be added to each district, regardless of $\ell$. Thus, for any $\ell\ge 3$, X3C polynomial-time many-one reduces to both $t\dd {\rm approval}\dd\crc{\ell}$ and $t\dd {\rm approval}\dd\crc{}$.

    \noindent{\bfseries Proof for \boldmath$t$-veto.}

    In this part of the proof, we only create one set of buffer candidates for each district.
    Let $A = \calu$.
    For each $i \in [k]$, let
    $B^i= \{b^i_{1}, \ldots, b^i_{t-1}\}$ such that $B_i \cap A = \emptyset$.
    Let $A'_i = A - S_i$, where $S_i = \{s_{i, 1}, s_{i, 2}, s_{i, 3}\}$, and 
    for each $i \in [k]$, define district $D_i = (C_i, V_i)$, where
    $C_i = \{p_i\} \cup B^i$ such that $p_i$ is a new candidate not in $A \cup B^i$,
    and $V_i$ contains the following five votes (over $C_i \cup A$):
     \begin{itemize}
        \item $p_i > s_{i,1} > s_{i,2} > s_{i,3} > \vec{A_i'} > B^i$
        \item $p_i > s_{i,2} > s_{i,3} > s_{i,1} > \vec{A_i'} > B^i$
        \item $p_i > s_{i,3} > s_{i,1} > s_{i,2} > \vec{A_i'} > B^i$
        \item $s_{i,1} > s_{i,2} > s_{i,3} > p_i > \vec{A_i'} > B^i$
        \item $s_{i,1} > s_{i,2} > s_{i,3} > p_i > \vec{A_i'} > B^i$
    \end{itemize}

    \begin{lemma}\label{lemma:k-veto-winners}
        For each $i \in [k]$ and
        for each nonempty $\hat{A} \subseteq A$, the winner set of $(C_i \cup \hat{A}, V_i)$ under $t$-veto is $\hat{A}$ if and only if $\hat{A} = S_i$.
    \end{lemma}
    \begin{proof}
        Let $i \in [k]$ and $\hat{A} \subseteq A$ be nonempty.
        
        Suppose $\hat{A} = S_i$.
        Therefore, in the $t$-veto election $(C_i \cup \hat{A}, V_i)$, each candidate in $\hat A$ gets 4 points, while $p_i$ gets 3 points, and the remaining candidates get 0 points, so the winner set is $\hat A$.

        Suppose the winner set of $(C_i \cup \hat{A}, V_i)$ under $t$-veto is $\hat{A}$. If $\hat{A} \cap A'_i \neq \emptyset$, then at least one such candidate (in the intersection) receives 0 points and loses and cannot win as $p_i$ receives at least 1 point. So $\hat{A} \subseteq S_i$. 
        If $\card{\hat A} = 1$, then the added candidate receives 2 points and loses to $p_i$ who receives 3 points. Thus $\card{\hat A} > 1$.
        If $\card{\hat A} = 2$, then only one of the two added candidates wins. Thus $\card{\hat A} > 2$.
        If $\card{\hat A} = 3$, then the winner set is $\hat A$. So $\hat{A} = S_i$.
    \end{proof}
    The remainder of the proof follows in the same way as the proof of the $t$-approval part.
\end{proof}

Though the results highlighted so far may paint a bleak picture for recampaigning as being inherently hard (computationally), 
there is also a significant class of natural voting rules, such as resolute voting rules like Condorcet and Ranked Pairs (see~\cite{rot:b:econ-second-edition} for definitions), under which those problems are in $\P$. We prove that as our next result.

\begin{theorem}\label{t:resolute-poly}
    Priced/unpriced bounded/unbounded recampaigning in is $\P$ under all polynomial-time computable resolute (i.e., $1\textsc{-Resolute}$) voting rules.
\end{theorem}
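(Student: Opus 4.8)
The plan is to obtain this as a short corollary of two results already established: Proposition~\ref{prop:k-resolute}, which shows that resoluteness collapses all of the bounded and unbounded variants onto one another, and Theorem~\ref{theorem:crc-1} (equivalently Lemma~\ref{lemma:simpleuw-to-p}), which places $1$-recampaigning in $\P$ whenever the underlying rule is tractable enough to satisfy the $1$-\textsc{Winner} condition.

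First I would observe that every polynomial-time computable voting rule $\cale$ trivially satisfies $1$-\textsc{Winner}: on input $(C,V)$ and $p \in C$, one computes $\cale(C,V)$ in polynomial time and checks both $p \in \cale(C,V)$ and $\card{\cale(C,V)} \le 1$, and when $\cale$ is $1$-\textsc{Resolute} the second check is automatically satisfied. Hence Lemma~\ref{lemma:simpleuw-to-p} (or directly Theorem~\ref{theorem:crc-1}) gives $\cale\dd\crc{1} \in \P$ and $\cale\dd\$\crc{1} \in \P$.

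Next, since $\cale$ is $1$-\textsc{Resolute}, applying Proposition~\ref{prop:k-resolute} with $L = 1$ yields, for every $\ell \ge 1$, the language identities $\cale\dd\crc{\ell} = \cale\dd\crc{1} = \cale\dd\crc{}$ and $\cale\dd\$\crc{\ell} = \cale\dd\$\crc{1} = \cale\dd\$\crc{}$. Chaining these with the previous paragraph shows that all of $\cale\dd\crc{\ell}$, $\cale\dd\$\crc{\ell}$, $\cale\dd\crc{}$, and $\cale\dd\$\crc{}$ lie in $\P$, which is exactly the claim.

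I do not expect a genuine obstacle here: the substantive content (the minimum-cost unbalanced-assignment algorithm behind Lemma~\ref{lemma:simpleuw-to-p} and the resoluteness collapse of Proposition~\ref{prop:k-resolute}) has already been done, and membership in $\P$ transfers across the variants by plain set equality, so not even a reduction is needed. The only point requiring care is to invoke Proposition~\ref{prop:k-resolute} in the right form so that the collapse simultaneously covers the priced unbounded case and the unpriced bounded cases.
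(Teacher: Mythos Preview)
Your proposal is correct and follows essentially the same approach as the paper: invoke Proposition~\ref{prop:k-resolute} with $L=1$ to collapse all variants to $1$-recampaigning, observe that polynomial-time computability (together with $1$-\textsc{Resolute}) yields the $1$-\textsc{Winner} condition, and then apply Lemma~\ref{lemma:simpleuw-to-p}/Theorem~\ref{theorem:crc-1} to place $\cale\dd\crc{1}$ and $\cale\dd\$\crc{1}$ in $\P$.
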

\begin{proof}
    Let $\cale$ be a polynomial-time computable resolute, i.e., 1-\textsc{Resolute}, voting rule. By Proposition~\ref{prop:k-resolute}, for each $\ell \geq 1$, $\cale\dd\crc{\ell} = \cale\dd\crc{}$ and also $\cale\dd\$\crc{\ell} = \cale\dd\$\crc{}$.
    Since $\cale$ is both 1-\textsc{Resolute} and polynomial-time computable, $\cale$ satisfies the 1-\textsc{Winner} condition, and thus both $\cale\dd\crc{1}$ and $\cale\dd\$\crc{1}$ are in $\P$, which concludes the proof.
\end{proof}

Moreover, we believe that Theorem~\ref{t:resolute-poly} highlights an important decision in the definition of voting rules. Pure (classical) social choice theory often requires a voting rule to elect a unique winner---and doing so at times involves defining a tie-breaking method---while it is common  for \emph{computational} social choice to allow voting rules to elect more than one winner, as we do in this paper. From a complexity perspective, such a seemingly innocuous decision can have drastic consequences in the recampaigning setting. For a detailed discussion on the naturalness of the empty winner set, we refer readers to~\cite[Footnote~3]{hem-hem-men:j:search-versus-decision}.

\section{Parametrizing by Number of Districts}\label{s:parametrized}

 Since recampaigning is NP-complete for a large class of voting rules that we care about, it is useful to consider also the cases where the problem may be tractable. 
 In particular, we focus on parametrizing with respect to the number of districts as that number may be much smaller than the size of the input, for example, if one is dealing with two districts, each with a large number of voters (or even candidates), it may not be immediately clear how the complexity of recampaigning is affected by this parametrization.

We find that in the bounded-recampaigning case, the problem is indeed fixed-parameter tractable under polynomial-time computable voting rules. We do so by drawing a relationship between the bound on the number of winners and on the number of districts, which allows us to determine in polynomial time No-instances where the number of additional candidates exceeds a given constant. This thus allows us to reduce the problem instance to a constant-sized (with respect to the parameter) instance of a related set covering problem, which we can solve efficiently (due to the new instance being constant-sized). We will first describe the just-mentioned reduction, and prove some helpful properties about it, before proving this subsection's main theorem.

\begin{construction}\label{construction:priced-crc-to-set-cover}
    Let $\cale$ be a voting rule, let $\ell$ be a positive integer, and let $I = (D_1, \ldots, D_k, A, \pi, B)$ an instance of $\cale\dd\$\crc{\ell}$. Assume without loss of generality that $A \cap [k] = \emptyset$.
    Define the sets $\calu$ and $S$, and the weight function $w$ as follows.
    \begin{itemize}
    \item $\calu = A \cup [k]$. 
    \item For each $A' \in 2^A$ and each $i \in [k]$, let $w(A'\cup \{i\}) = \sum_{a \in A'}\pi(i, a)$.
    \item For each $A' \in 2^A$ and each $i \in [k]$, $A' \cup \{i\} \in S$ if and only if 
    \begin{itemize}
        \item $\card{A'} = 0$ or
        \item $0 < \card{A'} \leq \ell$, $A' = \cale(C_i \cup A', V_i)$, and $w(A' \cup \{i\}) \leq B$.
    \end{itemize}
\end{itemize}
\end{construction}

In the above construction, finding a successful assignment of candidates that satisfies the budget $B$ corresponds exactly to finding an exact cover of $\calu$ in $S$ with weight at most $B$. , i.e., finding a set $S' \subseteq S$ of pairwise-disjoint sets that contain all the elements of $\calu$ such that $\sum_{T \in S'}w(T) \leq B$. 

\begin{lemma}\label{lemma:crc-set-cover-correctness}
    In Construction~\ref{construction:priced-crc-to-set-cover}, $I$ is a Yes-instance of $\cale\dd\$\crc{\ell}$ if and only if $S$ contains an exact cover of $\calu$ with weight at most $B$.
\end{lemma}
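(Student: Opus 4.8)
The plan is to prove both directions of the biconditional by exhibiting the natural correspondence between assignments of additional candidates to districts and exact covers of $\calu = A \cup [k]$ in $S$. First I would note that every element $i \in [k]$ must be covered exactly once, and the only sets in $S$ containing $i$ are of the form $A' \cup \{i\}$ for some $A' \in 2^A$; similarly every $a \in A$ must be covered exactly once, and the only sets containing $a$ are of the form $A' \cup \{i\}$ with $a \in A'$. Hence in any exact cover $S'$, there is exactly one set $T_i = A_i \cup \{i\}$ for each district $i$, and the $A_i$ partition $A$. This gives a bijection between exact covers and functions assigning each additional candidate to a district.

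For the forward direction, suppose $I$ is a Yes-instance witnessed by an assignment placing the set $A_i \subseteq A$ into district $D_i$. By the definition of $\cale\dd\$\crc{\ell}$, for each $i$ with $A_i \neq \emptyset$ we have $A_i = \cale(C_i \cup A_i, V_i)$ (every added candidate wins and the district has at most $\ell$ winners, so the winner set is exactly $A_i$ and $\card{A_i} \le \ell$), and $\sum_i \sum_{a \in A_i}\pi(i,a) \le B$. Therefore each $A_i \cup \{i\}$ is in $S$ (using the $\card{A'}=0$ clause when $A_i = \emptyset$, and the second clause otherwise, noting $w(A_i \cup \{i\}) = \sum_{a\in A_i}\pi(i,a) \le B$), these sets are pairwise disjoint, they cover $\calu$, and their total weight equals the total cost of the assignment, which is at most $B$. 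So $S'=\{A_i\cup\{i\} : i \in [k]\}$ is an exact cover of weight at most $B$.

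For the reverse direction, given an exact cover $S'$ of weight at most $B$, write $S' = \{A_i \cup \{i\} : i \in [k]\}$ as above (each $i$ appears in exactly one chosen set, whose non-$[k]$ part is $A_i$, and the $A_i$ partition $A$ since each $a\in A$ is covered exactly once). Define the assignment sending each $a \in A_i$ to district $D_i$. For each nonempty $A_i$, membership in $S$ forces $A_i = \cale(C_i \cup A_i, V_i)$ and $\card{A_i}\le \ell$, so every candidate in $A_i$ wins in $D_i$ and that district has exactly $\card{A_i}\le\ell$ winners; districts with $A_i=\emptyset$ receive no additional candidate and impose no constraint. The total cost is $\sum_{i}\sum_{a\in A_i}\pi(i,a) = \sum_{i} w(A_i\cup\{i\}) = \sum_{T\in S'} w(T) \le B$. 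Hence the assignment is a valid solution and $I$ is a Yes-instance.

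I do not expect a serious obstacle here; the main point to be careful about is the bookkeeping that the chosen sets in an exact cover are forced to have the form $A'\cup\{i\}$ with distinct $i$'s — i.e., that one cannot "waste" coverage of an $i$ by two different sets — which follows immediately from exactness and the fact that $[k]$-elements and $A$-elements never mix in a way that would allow an element of $[k]$ to be covered by a set not of this shape (every set in $S$ contains exactly one element of $[k]$). The weight equality $w(A'\cup\{i\}) = \sum_{a\in A'}\pi(i,a)$ is definitional, so the cost-versus-budget condition transfers transparently in both directions.
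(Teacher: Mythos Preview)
Your proposal is correct and follows essentially the same approach as the paper's proof: both directions exploit the same bijection between assignments $(A_1,\ldots,A_k)$ and exact covers $\{A_i\cup\{i\}:i\in[k]\}$, verifying membership in $S$ and matching total weight against $B$ in the obvious way. Your write-up is in fact slightly more explicit than the paper's in justifying why every set in an exact cover must have the shape $A'\cup\{i\}$ with distinct $i$'s.
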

\begin{proof}

    $\implies:$ Suppose $I$ is a Yes-instance of $\cale\dd\$\crc{\ell}$ and let $A_1, \ldots, A_k$ be a partition of $A$ corresponding to the assignment of candidates in $A$ to districts. The cost of this assignment is $\sum_{i=1}^k \sum_{a \in A_i} \pi(i, a) \leq B$.
   
    Let $i \in [k]$ and let $S_i = A_i \cup \{i\}$.
    It follows that $S_i \in S$ and $w(S_i) = \sum_{a \in A_i} \pi(i, a)$.
    Since $A_1, \ldots, A_k$ is a partition of $A$, it follows that $S_1, \ldots, S_k$ is an exact cover of $\calu$. Moreover it has weight
    \begin{align*}
        \sum_{i=1}^{k} w(S_i) = \sum_{i=1}^k \sum_{a \in A_i} \pi(i, a) \leq B.
    \end{align*}
    
    Thus $S$ contains an exact cover of $\calu$ with weight at most $B$.

    $\impliedby$: Suppose $S$ contains an exact cover of $\calu$ with weight at most $B$. Since each $i \in [k]$ can only be in one element of the cover and all such elements must be in the cover, it follows that the cover has size $k$. Let $S_i$ be the element of the cover containing $i$ and let $A_i = S_i - \{i\}$. 
    
    Because $S_1, \ldots, S_k$ is an exact cover of $\calu$, every $a \in A$ is in some $A_i$ and no two sets in the cover intersect. We can thus conclude that $A_1, \ldots, A_k$ is a partition of $A$, and thus corresponds to an assignment of candidates to districts.
    Let $S_i \in S$. It follows that either $\card{A_i} = 0$, or $0 < \card{A_i} \leq \ell$ and $A_i = \cale(C_i \cup A_i, V_i)$, and so $A_1, \ldots, A_k$ corresponds to an assignment of candidates to districts such that every additional candidate is a winner in the district they win and there are at most $\ell$ winners in such districts. 
    
    It thus remains to show that the cost of the assignment is at most $B$. The cost of the assignment is
    \begin{align*}
        \sum_{i=1}^{k} \sum_{a \in A_i} \pi(i, a) = \sum_{i=1}^k w(A_i \cup \{i\}) = \sum_{i=1}^k w(S_i) \leq B.
    \end{align*}
\end{proof}

\begin{theorem}\label{theorem:fpt-districts}
    For each $\ell \geq 1$ and each voting rule $\cale$ satisfying $\ell$-{\rm \sc Winner}, $\cale\dd\$\crc{\ell}$ and $\cale\dd\crc{\ell}$ are fixed-parameter tractable with respect to the number of districts.
\end{theorem}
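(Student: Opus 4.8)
The plan is to use Construction~\ref{construction:priced-crc-to-set-cover} together with Lemma~\ref{lemma:crc-set-cover-correctness}, after a pruning step that caps the number of additional candidates at a function of the parameter. Fix $\ell$ and let $I = (D_1,\ldots,D_k,A,\pi,B)$ be an instance of $\cale\dd\$\crc{\ell}$, parametrized by the number of districts $k$. The key observation is that any Yes-instance has $\card A \le k\ell$: a solution partitions $A$ among the $k$ districts, and a district receiving a nonempty part $A_i \subseteq A$ elects every member of $A_i$ and has at most $\ell$ winners, so $\card{A_i} \le \ell$; summing over the districts gives $\card A \le k\ell$. Hence if $\card A > k\ell$ the algorithm rejects in polynomial time.

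Assuming $\card A \le k\ell$, run Construction~\ref{construction:priced-crc-to-set-cover} to obtain the weighted set system $(\calu, S, w)$. Then $\card\calu = \card A + k \le k(\ell+1)$, and since every member of $S$ is of the form $A'\cup\{i\}$ with $i\in[k]$ and $A'\subseteq A$, $\card{A'}\le\ell$, we get $\card S \le k\sum_{j=0}^{\ell}\binom{\card A}{j} \le k(\ell+1)(k\ell)^{\ell}$, a bound depending only on $k$ (recall $\ell$ is a fixed constant). By Lemma~\ref{lemma:crc-set-cover-correctness}, $I$ is a Yes-instance iff $S$ has an exact cover of $\calu$ of weight at most $B$, so it remains to (i) build $(\calu, S, w)$ and (ii) solve minimum-weight exact cover on a universe of size $O_\ell(k)$. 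For (ii) I would run the textbook subset dynamic program, $g(\emptyset)=0$ and $g(T)=\min\{\,w(X)+g(T\setminus X)\ :\ X\in S,\ X\subseteq T\,\}$, and accept iff $g(\calu)\le B$; its running time is $2^{\card\calu}\cdot\card S\cdot\mathrm{poly}(\card I) \le 2^{k(\ell+1)}\cdot k(\ell+1)(k\ell)^{\ell}\cdot\mathrm{poly}(\card I)$, which has the required FPT shape.

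The step I expect to require real care --- and where the hypothesis that $\cale$ merely satisfies $\ell$-\textsc{Winner} (rather than being polynomial-time computable) is exactly what is used --- is task (i): deciding, for given $i$ and $A'$ with $0<\card{A'}\le\ell$, whether $A'\cup\{i\}\in S$, i.e., whether $A' = \cale(C_i\cup A', V_i)$ (the budget part $w(A'\cup\{i\})\le B$ being a trivial arithmetic check). The $\ell$-\textsc{Winner} algorithm only tests whether a single given candidate is a winner under the promise that there are at most $\ell$ winners, so I would run it on $(C_i\cup A', V_i, p)$ for every $p\in C_i\cup A'$ and let $P$ collect the candidates it accepts; I then claim $P=A'$ iff $A'=\cale(C_i\cup A', V_i)$. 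If $\card{\cale(C_i\cup A', V_i)}\le\ell$, the algorithm accepts exactly the winners, so $P=\cale(C_i\cup A', V_i)$ and the claim holds; if $\card{\cale(C_i\cup A', V_i)}>\ell$, the algorithm rejects on every candidate, so $P=\emptyset\neq A'$ (since $A'\neq\emptyset$), and simultaneously $\cale(C_i\cup A', V_i)\neq A'$ because $\card{A'}\le\ell<\card{\cale(C_i\cup A', V_i)}$ --- so the test is sound in both cases. Each test is polynomial time and only $O_\ell(k^{\ell+1})$ of them are needed, so $(\calu, S, w)$ is computed within the FPT budget; with step (ii) and Lemma~\ref{lemma:crc-set-cover-correctness} this proves the claim for $\cale\dd\$\crc{\ell}$. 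For the unpriced problem, Observation~\ref{observation:trivial-relationships} gives $\cale\dd\crc{\ell}\leq_m^p\cale\dd\$\crc{\ell}$ via a reduction that leaves the number of districts unchanged, so $\cale\dd\crc{\ell}$ is fixed-parameter tractable with respect to $k$ as well.
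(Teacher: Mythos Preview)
Your proof is correct and follows the same route as the paper: reject when $\card A > k\ell$, apply Construction~\ref{construction:priced-crc-to-set-cover} and Lemma~\ref{lemma:crc-set-cover-correctness}, and solve the resulting bounded-size weighted exact-cover instance (you use a subset DP where the paper brute-forces over $k$-element subfamilies of $S$, but both are FPT in $k$). Your argument is in fact more careful than the paper's in one respect: you explicitly show how the $\ell$-\textsc{Winner} oracle suffices to decide whether $A' = \cale(C_i\cup A', V_i)$, a step the paper asserts without justification.
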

\begin{proof}
If $\cale$ satisfies $1$-\textsc{Winner}, then $\cale\dd\$\crc{1}$ and $\cale\dd\crc{1}$ are in $\P$, and thus fixed-parameter tractable.
Let $\ell \geq 2$ and let $\cale$ be a voting rule satisfying $\ell$-{\rm \sc Winner}. By Observation~\ref{observation:trivial-relationships}, it suffices to show that $\cale\dd\$\crc{\ell}$ is fixed-parameter tractable with respect to the number of districts.

Let $I = (D_1, \ldots, D_k, A, \pi, B)$ be an instance of $\cale\dd\$\crc{\ell}$ input to our $\FPT$ algorithm. The key observation for this result is the following: if $I$ is a Yes-instance, then a successful assignment of the $n$ additional candidates assigns at most $\ell$ candidates to each of the $k$ districts. Thus if $n > k\ell$, then $I$ cannot be a Yes-instance, so our algorithm rejects $I$ if $n > k\ell$.

If $n \leq k\ell$, since $\ell$ is a constant, then $n$ is also constant-bounded (with respect to our parameter $k$). Construct $\calu$, $S$ and $w$ per Construction~\ref{construction:priced-crc-to-set-cover}. If $S$ contains an exact cover of $\calu$ with weight at most $B$, our algorithm accepts. Otherwise it rejects.

By Lemma~\ref{lemma:crc-set-cover-correctness}, our algorithm is correct. It remains now to show that there is function $f$ such that our algorithm runs in time $f(k)|I|^{O(1)}$.

First, checking if $n \leq k \ell$ can be done in time polynomial in $|I|$.
Additionally,
$\card{\calu} = n+k \leq 2k+\ell$, $\card{S} \leq k2^n \leq k2^{k\ell}$, and the weight of each element of $S$ is at most $B$. 
Moreover,
one can see that
the time to perform the construction defined by Construction~\ref{construction:priced-crc-to-set-cover} is polynomial in $\card{\calu}$, $\card{S}$, and $|I|$.
Finally, to check if there is an exact cover of $\calu$ with weight at most $B$, it suffices to check every $k$-sized subset of $S$ to ensure they form an appropriate cover. There are $\binom{\card{S}}{k} = O(\card{S}^k)$ such sets, and verifying one such subset  is an exact cover with weight at most $B$ can be done in time polynomial in $\card{S}$ and $\card{B}$. 
\end{proof}

Next we demonstrate a gap between the complexity of bounded and unbounded recampaigning by showing that
unbounded recampaigning is para-\NP-hard with respect to the number of districts under the $t$-approval family of voting rules.\footnote{To prove that a problem is para-\NP-hard with respect to some parameter, it suffices to show that the problem is \NP-hard even when that parameter is a constant. Moreover, $\FPT = {\rm para}\dd\NP$-hard if and only $\P = \NP$. See~\cite{flu-gro:b:parameterized-complexity} for more details.}

\begin{theorem}\label{theorem:para-np-hard}
    For each $t \geq 1$, 
    unbounded recampaigning is {\rm para-NP}-hard, in both the priced and unpriced settings, under $t$-approval.
\end{theorem}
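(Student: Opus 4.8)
The plan is to show that for each fixed $t\ge 1$, $t\dd\approval\dd\crc{}$ is \NP-hard already when the number of districts is fixed to three; since establishing ${\rm para}\dd\NP$-hardness only requires \NP-hardness with the parameter bounded by a constant, this gives the theorem for the unpriced setting, and the priced setting then follows from the reduction of Observation~\ref{observation:trivial-relationships}, which does not increase the number of districts (or, directly, by setting every price to $1$ and the budget to $\card A$, which any assignment meets exactly). The source problem is $3$-\textsc{Coloring}, which is \NP-complete. Given a graph $G=(V,E)$, I would let the additional candidates be the vertices, $A=V$, and build three districts $D_1,D_2,D_3$ that are copies of a single gadget election, with the internal ``gadget candidates'' renamed apart between copies but all three votes also specifying a position for each candidate in $A=V$. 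The gadget is to be designed so that, for every $S\subseteq V$, \emph{every} candidate of $S$ is a $t$-approval winner of the gadget election restricted to (gadget candidates) $\cup\,S$ if and only if $S$ is an independent set of $G$. Granting this, an assignment of all of $A$ to $D_1,D_2,D_3$ that makes every candidate win in its district is precisely a partition of $V$ into three independent sets, i.e., a proper $3$-coloring of $G$.

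To build the gadget, put $B=2\card V$ and let the gadget's own candidate set consist of a distinguished candidate $s$ together with a supply of ``dummy'' candidates, a fresh batch of which is created for each individual vote. The votes are: (i) $B$ votes with $s$ in the scoring position; (ii) for each $v\in V$, $B-2\deg_G(v)$ votes with $v$ in the scoring position (note $B-2\deg_G(v)\ge 2$, since $\deg_G(v)<\card V$); and (iii) for each edge $\{u,v\}\in E$, one vote favoring $u$ over $v$ and one favoring $v$ over $u$. To make everything uniform in $t$ and to keep the dummies inert, each vote is written as: $t-1$ fresh dummies, then the one or two ``real'' candidates relevant to that vote, then one further fresh dummy, then all remaining candidates in an arbitrary fixed order. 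Consequently, in every vote the scored positions are occupied by the front dummies (so each fresh dummy scores at most $1$) together with the single highest-ranked \emph{present} real candidate of that vote. A short count then shows that, whatever the set $S\subseteq V$ of added candidates is, $\score(s)=B$ always, every dummy has score at most $1$, and each $v\in S$ has $\score(v)=B-\card{\{u\in S:\{u,v\}\in E\}}$, i.e., $B$ minus the number of $v$'s $G$-neighbors that also lie in $S$; here the two edge-votes for $\{u,v\}$ contribute $2$ points to $v$ when $u\notin S$ but only $1$ point when $u\in S$, which, combined with the $B-2\deg_G(v)$ solo votes for $v$, yields the claimed score.

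The verification is then immediate. If $S$ is independent, every $v\in S$ has score $B$ and ties $s$ for the maximum, so all of $S$ wins (the only other candidate reaching $B$ is $s$, which lies in the gadget's candidate set and so is irrelevant to the recampaigning requirement). If $S$ is not independent, pick $\{u,v\}\in E$ with $u,v\in S$; then $\score(u)\le B-1<B=\score(s)$, so $u$ is not a winner and hence not all of $S$ wins. This proves the gadget property and thus the equivalence between proper $3$-colorings of $G$ and Yes-instances of the constructed instance; the construction is clearly polynomial time ($O(\card V^2)$ votes, $O(t\card V^2)$ dummies, three districts). Hence $t\dd\approval\dd\crc{}$ is \NP-hard with three districts, which settles the theorem.

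The part I expect to require the most care is making the gadget \emph{degree-oblivious}: because unbounded recampaigning forces all added candidates placed in one district to be simultaneous co-winners there, each such candidate must hit the \emph{same} target score $B$ regardless of how many edges it has --- which is exactly the reason for pre-loading each vertex $v$ with $B-2\deg_G(v)$ solo votes while letting each incident edge ``refund'' two points, unless the other endpoint is co-located (in which case only one point comes back and $v$ slips below $s$). The only other point needing attention is uniformity across all $t$, handled by the fresh-dummy front-padding, along with the routine check that no dummy ever reaches score $B$, so dummies never interfere.
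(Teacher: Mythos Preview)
Your proof is correct and takes a genuinely different route from the paper's. The paper reduces from Monotone Exact-3 1-in-3 SAT (E3-1/3-SAT$^+$) and uses \emph{two} asymmetric districts: one district is engineered so that the candidates placed there must correspond to exactly one variable per clause (the ``true'' variables), and the other so that they must correspond to exactly two per clause (the ``false'' variables). Your reduction instead comes from 3-\textsc{Coloring} and uses \emph{three} symmetric districts, each a renamed copy of a single gadget whose defining property is that a subset $S\subseteq V$ can be made to co-win precisely when $S$ is independent in $G$; the assignment of $A=V$ across the three districts then encodes a partition of $V$ into three independent sets.

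What each buys: the paper's construction is a bit tighter on the parameter (two districts rather than three), and its extension to $t>1$ is handled by a uniform ``replace every candidate by $t$ copies'' trick. Your approach is arguably more modular---the independent-set gadget is a clean, reusable statement, and the three districts are literally identical up to renaming---and your handling of general $t$ via per-vote fresh front-padding dummies plus one trailing dummy is self-contained and transparent. Your score analysis (in particular the degree-compensation via $B-2\deg_G(v)$ solo votes so that every co-located vertex must hit exactly $B$) checks out, and the dummy bound (each fresh dummy scores at most $1<B$) is fine since $B=2\card V\ge 2$. The only cosmetic point: where you write ``all three votes also specifying a position for each candidate in $A=V$'' you presumably mean ``all three districts' votes''.
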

\begin{proof}

    Let $t$ be an arbitrary positive integer. We give the proof by giving a reduction from a variant of Monotone 1-in-3 SAT where each variable occurs exactly three times, which is NP-complete~\cite{moo-rob:j:monotone-1-3-sat}.

    \LANGDEF{Monotone Exact-3 1-in-3 SAT (E3-1/3-SAT$^+$)~\cite{moo-rob:j:monotone-1-3-sat}}
    {A 3-CNF boolean formula $f$ with only positive literals such that every variable occurs in exactly three clauses.}
    {Is there a satisfying assignment of $f$ to that exactly one literal in each clause is set to true?}

    To prove the para-\NP-hardness of interest, it suffices to prove that $t$-approval-CRC$_{\naturals}$ is \NP-hard for a constant number of districts (and by the transitivity of polynomial-time many-one reductions, the analogous result in the priced setting will follow).
    
    The intuition here is to make the additional candidates correspond to $f$'s variables and to partition the variables into two groups: group~1 contains the variables set to true, while group~2 contains the variables set to false. In the reduction, the groups are represented by districts, and the construction is such that each clause will have exactly one variable in group~1 and exactly two variables in group~2.

    Let $f$ be an instance of E3-1/3-SAT$^+$ with clauses $S_1, \ldots, S_m$ over variables $x_1, \ldots, x_n$. Assume without loss of generality that $m > 3$ and $n > 1$, as otherwise the problem can be solved in polynomial time. For simplicity, we interpret each clause as a set of variables.
    We construct two districts $D_1$ and $D_2$ as follows. 

    We first prove the case for 1-approval, and the construction generalizes to $t$-approval. For brevity, let $\calr = 1$-approval.
    Let the set of additional candidates be $A = \{x_1, \ldots, x_n\}$.
    Let $D_1 = (C_1, V_1)$, where $C_1 = \{s_1, \ldots, s_m\}$ and $V_1$ contains the following votes for each $i \in [m]$:
    \begin{itemize}
        \item one vote of the form $\overset{\rightarrow}{S_i} > s_i > \cdots$ and
        \item three votes of the form $s_i > \cdots$.
    \end{itemize}

    Furthermore, let $D_2 = (C_2, V_2)$, where $C_2 = \{t_1, \ldots, t_m\}$ and $V_2$ contains the following votes for each $i \in [m]$:
    \begin{itemize}
        \item one vote of the form $\overset{\rightarrow}{S_i} >  t_i > \cdots$,
        \item one vote of the form $\overset{\leftarrow}{S_i} >  t_i > \cdots$, and
        \item three votes of the form $t_i > \cdots$.
    \end{itemize}

    The reduction is polynomial-time computable, so we now show that $f$ has a satisfying assignment with exactly one literal in each clause is set to true if and only if $(A, D_1, D_2) \in \calr\dd\crc{}$.

    $\implies$: Suppose $f$ is in E3-1/3-SAT$^+$ and let $X$ denote the variables set to true such that each clause in $f$ has exactly one variable set to true. Consider the $\calr$ election $(C_1 \cup X, V_1)$. Then each $x \in X$ receives 3 points and each $s_i \in C_1$ receives 3 points. Thus the winner set contains $X$. 
    Moreover in the $\calr$ election $(C_2 \cup (A-X), V_2)$, each $a \in A-X$ receives 3 points and each $t_i \in C_2$ receives 3 points. This is because for a pair of votes of the form $\overset{\rightarrow}{S_i} >  t_i > \cdots$ and  $\overset{\leftarrow}{S_i} >  t_i > \cdots$, two candidates from $S_i$ must be present in $A-X$ and each of those votes gives a point to a different candidate. Thus $(A, D_1, D_2) \in \calr\dd\crc{}$

    $\impliedby$: Suppose $(A, D_1, D_2) \in \calr\dd\crc{}$. 
    Let $X$ denote candidates assigned to $D_1$ in a solution.
    Each vote of the form $S_i > s_i > \cdots$ in the first district can give a point to at most one additional candidate and each additional candidate must receive at least three points to win (since each $s_i$ will receive at least three points). This corresponds to at least one variable in each clause to being set to true. Suppose there are two distinct candidates $x_i, x_j \in X$ such that they both appear in the same clause $S_i$. Then one of them will receive at most two points and thus not win in $D_1$, which contradicts the assumption that $(A, D_1, D_2) \in \calr\dd\crc{}$. Thus there is a selection of variables that can be set to true so that exactly one variable in each clause is true.
    Thus $f$ is in E3-1/3-SAT$^+$.

    We now describe how to extend the above construction for $t$-approval when $t > 1$. Let $t \geq 2$ be an arbitrary integer.
    Let the set of additional candidates be $A = \{x_{1, i}, \ldots, x_{n, i} \mid i \in [t]\}$.
    Then in the above construction of $D_1$ and $D_2$, for each $i \in [n]$, replace each occurrence of  $x_i$ in a vote with $x_{i, 1}, \ldots, x_{i, t}$ (in that order), and for each $i \in [m]$, respectively replace each occurrence of $s_i$ and $t_i$ with $s_{i, 1}, \ldots, s_{i, t}$ and $t_{i, 1}, \ldots, t_{i, t}$ (in that order). In doing so, we have effectively replaced each candidate by $t$ copies of itself, so that a candidate receives $N$ points in the 1-approval construction if and only if its corresponding $t$ copies receive $N$ points in the new construction. The correctness argument is effectively the same as the one given for the 1-approval case.
\end{proof}

\section{Conclusion}

We have introduced a new model to capture the aftereffects of redistricting from the standpoint of electoral control. We have compared our newly-introduced problems under the notions of separations and collapses, interreducibilities, and both computational complexity. 
We find that the complexity of recampaigning can depend on a variety of considerations, such as properties of the underlying voting rule, the number of ``allowed winners,'' and whether the cost of assigning candidates to districts can be different.
We summarize the complexity results in Table~\ref{table:summary}.
We also find that bounded redistricting can be easier than winner evaluation, while unbounded redistricting is never easier than winner evaluation.
There are many interesting future directions here, and we list a few below. 

\begin{table}[ht]
    \centering
    \caption{Summary of computational complexities established.}
    \footnotesize
    \label{table:summary}
    \begin{tabular}{c | c c c || c}
       {Recampaigning} & \multicolumn{3}{c||}{Bounded} & \multirow{2}{*}{Unbounded}\\\cline{2-4}
       {Type} & $\ell = 1$ & $\ell = 2$ & $\ell \geq 3$ & {} \\ \hline
       Worst-Case & \makecell{$\P$ iff $\cale$ satisfies\\ 1-\textsc{Winner}} & \makecell{\NP-complete for all\\ pure scoring rules} & \makecell{\NP-complete for\\ $t$-approval and $t$-veto} & \makecell{\NP-complete for\\ $t$-approval and $t$-veto}\\\hline
       \makecell{Parametrizing\\by Number of\\Districts} & \multicolumn{3}{c||}{\FPT\ if $\cale$ satisfies $\ell$-\textsc{Winner}} & \makecell{para-\NP-hard\\for $t$-approval}
    \end{tabular}
\end{table}

\paragraph{Modeling.}
A natural next step would be to consider destructive recampaigning, and explore how it relates (if at all) to constructive recampaigning. We suggest below a definition for bounded destructive recampaigning for each voting rule $\cale$ and $\ell \geq 1$, and leave open its study for future work.

\LANGDEF{$\cale$ Destructive $\ell$-Recampaigning ($\cale\dash{\rm DRC}_{\leq \ell}$)}
{Districts $D_1, \ldots, D_k$, where for each $i \in [k]$, $D_i = (C_i, V_i)$ is an $\cale$ election, a set of $n$ additional candidates $A$ not present in any of the $k$ districts, and a collection $F \subseteq \bigcup_{i \in [k]} C_i$.}
{Is there an assignment of candidates in $A$ to districts such that each candidate $f \in F$ is either not a winner in its district or $f$'s district has more than $\ell$ winners.}

\paragraph{Relationships.}
It would be interesting to characterize the containment relationships between our variants. As we have highlighted, our work does not establish that there is some polynomial-time computable voting rule such that priced unbounded recampaigning does not reduce to unpriced bounded recampaigning. That case remains open. It would also be valuable to prove Proposition~\ref{prop:priced-unpriced} using only natural voting rules. To our knowledge, the only other result that demonstrates a complexity gap between priced and unpriced control also uses an unnatural voting rule~\cite{mia-fal:j:priced-control}.

\paragraph{Complexity.}
Are there natural voting rules and domain restrictions (e.g., single-peakedness) under which a recampaigning problem's complexity goes from $\NP$-hard to being in $\P$?
Do other types of parametrizations reveal interesting relationships?
We have studied parametrization by the number of districts, which revealed a difference between bounded and unbounded recampaigning. The other natural parametrization would be by the number of candidates. Notice that in both the bounded and unbounded cases, such recampaigning is in $\XP$ (for $k$ districts and $n$ additional candidates, the number of possible assignments of additional candidates to districts is bounded by~$k^n$), so there is room to improve that observation.
Another interesting direction would be to study approximation algorithms for recampaigning, perhaps focusing on maximizing the number of additional candidates that win or on maximizing the number of districts where an additional candidate wins.

\bibliographystyle{alpha}

\begin{thebibliography}{CCH{\etalchar{+}}24}

\bibitem[BD10]{bet-dor:j:possible-winner-dichotomy}
N.~Betzler and B.~Dorn.
\newblock Towards a dichotomy of finding possible winners in elections based on scoring rules.
\newblock {\em Journal of Computer and System Sciences}, 76(8):812--836, 2010.

\bibitem[Bed24]{bed:news:boebert}
J.~Bedayn.
\newblock Republican {US} {Rep}. {Lauren} {Boebert} wins after switching districts in {Colorado}.
\newblock {\em Associated Press News}, November 5, 2024.
\newblock Updated November 6, 2024.

\bibitem[BTT92]{bar-tov-tri:j:control}
J.~{{Bartholdi}}, III, C.~Tovey, and M.~Trick.
\newblock How hard is it to control an election?
\newblock {\em Mathematical and Computer Modeling}, 16(8--9):27--40, 1992.

\bibitem[CCH{\etalchar{+}}24]{car-cha-hem-nar-tal-wel:j:sct}
B.~Carleton, M.~Chavrimootoo, L.~Hemaspaandra, D.~Narv\'{a}ez, C.~Taliancich, and H.~Welles.
\newblock Separating and collapsing electoral control types.
\newblock {\em Journal of Artificial Intelligence Research}, 81:71--116, 2024.

\bibitem[CD22]{cal-dan:news:australia-keneally}
K.~Calderwood and S.~Daniel.
\newblock Kristina {Keneally} concedes defeat in south-west seat of {Fowler}, previously held by {Labor} since 1984.
\newblock {\em Australian Broadcasting Corporation}, May 21, 2022.

\bibitem[Con85]{con:b:condorcet-paradox}
{J.-A.-N. de Caritat, Marquis de} Condorcet.
\newblock {\em Essai sur l'Application de L'Analyse \`{a} la Probabilit\'{e} des D\'{e}cisions Rendues \`{a} la Pluralit\'{e} des Voix}.
\newblock 1785.
\newblock Facsimile reprint of original published in Paris, 1972, by the Imprimerie Royale.

\bibitem[CZLR18]{coh-lew-ros:c:gerrymandering-graphs}
A.~Cohen-Zemach, Y.~Lewenberg, and J.~Rosenschein.
\newblock Gerrymandering over graphs.
\newblock In {\em Proceedings of the 17th International Conference on Autonomous Agents and Multiagent Systems}, pages 274--282. International Foundation for Autonomous Agents and Multiagent Systems, July 2018.

\bibitem[DW21]{duc-wal:b:political-geometry}
M.~Duchin and O.~Walch.
\newblock {\em Political Geometry}.
\newblock Springer, 2021.

\bibitem[EFPS20]{eib-fom-pan-sim:c:districts-fine-grained}
E.~Eiben, F.~Fomin, F.~Panolan, and K.~Simonov.
\newblock Manipulating districts to win elections: {Fine}-grained complexity.
\newblock In {\em Proceedings of the 34th AAAI Conference on Artificial Intelligence}, pages 1902--1909. AAAI Press, April 2020.

\bibitem[EHH15]{erd-hem-hem:c:more-natural-models-of-partition-control}
G.~Erd\'{e}lyi, E.~Hemaspaandra, and L.~Hemaspaandra.
\newblock More natural models of electoral control by partition.
\newblock In {\em Proceedings of the 4th International Conference on Algorithmic Decision Theory}, pages 396--413. Springer-Verlag {\it Lecture Notes in Artificial Intelligence \#9346}, September 2015.

\bibitem[FG06]{flu-gro:b:parameterized-complexity}
J.~Flum and M.~Grohe.
\newblock {\em Parameterized Complexity Theory}.
\newblock Springer-Verlag, 2006.

\bibitem[FH23a]{fit-hem:c:conformant-elections}
Z.~Fitzsimmons and E.~Hemaspaandra.
\newblock Complexity of conformant election manipulation.
\newblock In {\em Proceedings of the 24th International Symposium on Fundamentals of Computation Theory}, pages 176--189. Springer {Lecture Notes in Computer Science \#14292}, September 2023.

\bibitem[FH23b]{fit-hem:c:weighted-matching}
Z.~Fitzsimmons and E.~Hemaspaandra.
\newblock Using weighted matching to solve 2-{Approval}/{Veto} control and bribery.
\newblock In {\em Proceedings of the 26th European Conference on Artificial Intelligence}, pages 732--739. IOS Press, September 2023.

\bibitem[FR16]{fal-rot:b:handbook-comsoc-control-and-bribery}
P.~Faliszewski and J.~Rothe.
\newblock Control and bribery in voting.
\newblock In F.~Brandt, V.~Conitzer, U.~Endriss, J.~Lang, and A.~Procaccia, editors, {\em Handbook of Computational Social Choice}, pages 146--168. Cambridge University Press, 2016.

\bibitem[GJ79]{gar-joh:b:int}
M.~Garey and D.~Johnson.
\newblock {\em Computers and Intractability: {A} Guide to the Theory of {NP}-Completeness}.
\newblock {W. H. Freeman and Company}, 1979.

\bibitem[GJP{\etalchar{+}}21]{gup-jai-pan-roy-sau:c:gerrymandering-on-graphs-complexity}
S.~Gupta, P.~Jain, F.~Panolan, S.~Roy, and S.~Saurabh.
\newblock Gerrymandering on graphs: {Computational} complexity and parameterized algorithms.
\newblock In {\em Proceedings of the 14th International Symposium on Algorithmic Game Theory}, pages 140--155. Springer {Lecture Notes in Computer Science \#12885}, September 2021.

\bibitem[Hen24]{hen:news:voters-moving}
T.~Henderson.
\newblock Swing states see newcomers as {Americans} move from blue to red counties.
\newblock {\em Stateline}, April 3, 2024.

\bibitem[HHM20]{hem-hem-men:j:search-versus-decision}
E.~Hemaspaandra, L.~Hemaspaandra, and C.~Menton.
\newblock Search versus decision for election manipulation problems.
\newblock {\em ACM Transactions on Computation Theory}, 12(\#1, Article 3):1--42, 2020.

\bibitem[HHR07]{hem-hem-rot:j:destructive-control}
E.~Hemaspaandra, L.~Hemaspaandra, and J.~Rothe.
\newblock Anyone but him: {The} complexity of precluding an alternative.
\newblock {\em Artificial Intelligence}, 171(5--6):255--285, 2007.

\bibitem[HLM16]{hem-lav-men:j:schulze-and-ranked-pairs}
L.~Hemaspaandra, R.~Lavaee, and C.~Menton.
\newblock Schulze and ranked-pairs voting are fixed-parameter tractable to bribe, manipulate, and control.
\newblock {\em Annals of Mathematics and Artificial Intelligence}, 77(3--4):191--223, 2016.

\bibitem[Kar72]{kar:b:reducibilities}
R.~Karp.
\newblock Reducibility among combinatorial problems.
\newblock In R.~Miller and J.~Thatcher, editors, {\em Complexity of Computer Computations}, pages 85--103, 1972.

\bibitem[LL19]{lev-lew:c:reverse-gerrymandering}
O.~Lev and Y.~Lewenberg.
\newblock ``{Reverse} gerrymandering'': {Manipulation} in multi-group decision making.
\newblock In {\em Proceedings of the 33rd AAAI Conference on Artificial Intelligence}, pages 2069--2076. AAAI Press, July 2019.

\bibitem[LLR17]{lew-lev-ros:c:geographic-manipulation}
Y.~Lewenberg, O.~Lev, and J.~Rosenschein.
\newblock Divide and conquer: {Using} geographic manipulation to win district-based elections.
\newblock In {\em Proceedings of the 16th International Conference on Autonomous Agents and Multiagent Systems}, pages 624--632. International Foundation for Autonomous Agents and Multiagent Systems, May 2017.

\bibitem[MF15]{mia-fal:j:priced-control}
T.~Miasko and P.~Faliszewski.
\newblock The complexity of priced control in elections.
\newblock {\em Annals of Mathematics and Artificial Intelligence}, 77(3):225--250, 2015.

\bibitem[MR01]{moo-rob:j:monotone-1-3-sat}
C.~Moore and J.~Robson.
\newblock Hard tiling problems with simple tiles.
\newblock {\em Discrete \& Computational Geometry}, 26(4):573--590, 2001.

\bibitem[OCRS25]{okr-coh-rig-sch:news:tracking-redistricting}
A.~O'Kruk, E.~Cohen, R.~Rigdon, and F.~Schouten.
\newblock Tracking states' unprecedented redistricting efforts.
\newblock {\em CNN}, November 4, 2025.
\newblock Updated January 16, 2026.

\bibitem[Pal23]{pal:j:priced-gerrymandering}
D.~Palash.
\newblock Priced gerrymandering.
\newblock {\em Theoretical Computer Science}, 972(\#114080), 2023.

\bibitem[Rot24]{rot:b:econ-second-edition}
J.~Rothe, editor.
\newblock {\em Economics and Computation: {An} Introduction to Algorithmic Game Theory, Computational Social Choice, and Fair Division}.
\newblock Springer, 2nd edition, 2024.

\bibitem[RPSM24]{reg-plc-sai:news:farage}
H.~Regan, R.~Plcheta, and L.~Said-Moorhouse.
\newblock Farage wins first seat as his upstart right wing {Reform} {UK} party gains ground.
\newblock {\em CNN}, July 5, 2024.

\bibitem[RT12]{ram-tar:t:unbalanced-assignment}
L.~Ramshaw and R.~Tarjan.
\newblock On minimum-cost assignments in unbalanced bipartite graphs.
\newblock Technical Report HPL-2012-40R1, {HP} {Labs}, Palo Alto, CA, USA, 2012.

\bibitem[Tid87]{tid:j:clones-dodgson}
T.~Tideman.
\newblock Independence of clones as a criterion for voting rules.
\newblock {\em Social Choice and Welfare}, 4(3):185--206, 1987.

\bibitem[Unc24]{tim:news:gandhi}
Uncredited.
\newblock Rahul {Gandhi} to fight in {Wayanad}, no decision yet on {Amethi}.
\newblock {\em The Times of India}, March 8, 2024.

\bibitem[Wik26]{wik:url:mauritius-elections}
Wikipedia.
\newblock 2024 {Mauritian} general election.
\newblock \url{https://en.wikipedia.org/wiki/2024_Mauritian_general_election#Electoral_system}, 2026.

\end{thebibliography}
\newcommand{\etalchar}[1]{$^{#1}$}

\end{document}